\newtheorem{proposition}{Proposition}
\begin{document}
	
	%
	% paper title
	% Titles are generally capitalized except for words such as a, an, and, as,
	% at, but, by, for, in, nor, of, on, or, the, to and up, which are usually
	% not capitalized unless they are the first or last word of the title.
	% Linebreaks \\ can be used within to get better formatting as desired.
	% Do not put math or special symbols in the title.
	\title{Transmit Power Minimization for STAR-RIS Empowered Symbiotic Radio  Communications}
	%
	
	%

	% The paper headers
	\markboth{IEEE Transactions on Cognitive Communications and Networking}%
	{Shell \MakeLowercase{\textit{et al.}}: Bare Demo of IEEEtran.cls for IEEE Communications Society Journals}

	\author{Chao Zhou,~Bin Lyu,~\IEEEmembership{Member,~IEEE,}~Youhong  Feng,~\IEEEmembership{Member,~IEEE,} and\\
		% ~Shimin Gong,~\IEEEmembership{Member,~IEEE,} and
		Dinh Thai Hoang,~\IEEEmembership{Senior Member,~IEEE}
		\IEEEcompsocitemizethanks{\IEEEcompsocthanksitem C. Zhou and B. Lyu   are with the Key Laboratory of Ministry of Education in Broadband Wireless Communication and Sensor Network Technology, Nanjing University of Posts and Telecommunications, Nanjing 210003, China (email: zoe961992059@163.com, blyu@njupt.edu.cn).  Y. Feng is with School of Physics and Electronic information, Anhui Normal University,  Wuhu 241000, China (e-mail: yhfeng0215@126.com). D. T. Hoang is with School of Electrical and Data Engineering, University of Technology Sydney, Sydney, NSW 2007, Australia (email: hoang.dinh@uts.edu.au). 
			
		}
	}

	% make the title area
	\maketitle

	\begin{abstract}
		In this paper, we propose a simultaneously transmitting and reflecting reconfigurable intelligent surface (STAR-RIS) empowered transmission scheme  for  symbiotic radio (SR) systems to make more flexibility for network deployment and enhance system performance. The STAR-RIS is utilized to not only beam the primary signals from the  base station (BS) towards multiple primary users on the same  side of the STAR-RIS, but also achieve the secondary transmission to the secondary users  on another side.  We consider both the broadcasting signal model and unicasting signal model at the BS. For each model, we aim for minimizing the transmit power of the BS by designing the active beamforming and simultaneous reflection and transmission coefficients under the practical phase correlation constraint. To address  the  challenge of solving the formulated problem, we propose a  block coordinate descent  based algorithm with the semidefinite relaxation, penalty dual decomposition and successive convex approximation methods, which decomposes the original problem into one sub-problem about active beamforming and the other sub-problem about simultaneous reflection and transmission coefficients, and iteratively solve them until the convergence is achieved. Numerical results  indicate that the proposed scheme can reduce up to 150.6\% transmit power compared to the backscattering device enabled scheme.
	\end{abstract}
	
	\begin{IEEEkeywords}
		 Simultaneous transmission and reflection, reconfigurable intelligent surface, symbiotic radio, coupled phase shifts, transmit power minimization.
	\end{IEEEkeywords}

	\IEEEpeerreviewmaketitle

\section{Introduction}

Recently, reconfigurable intelligent surface (RIS), as an emerging technology in 6G communications, has arouse the wide attention from academia and industry \cite{RIS-introduction1,RIS-introduction2,RIS-introduction3}. A typical hardware architecture of RIS comprises of a reflection component, a copper plate and a control circuit. 
	 Compared to  the impedance metasurface, RIS can control its phase shift from $0$ to $2\pi$  to smartly reconstruct the associated multipath \cite{360phase}. Inspired by this, RIS has been viewed as a promising solution for boosting the performance of wireless communications and widely deployed for localization, massive connectivity, edge computing,   and physical layer security\cite{RIS-introduction4}.

At present, RIS is generally classified into reflection-only RIS \cite{RIS-introduction4}   and simultaneously transmitting and reflecting RIS (STAR-RIS)\cite{STAR-RIS-intorduction1}. For the reflection-only RIS, the transmitter and receiver have to be on the same side of the RIS.  Recently, most works focus on the applications of reflection-only RIS in wireless communications \cite{RIS,HuangCW,Dai,localization1,localization2,edge-comp1,PLS-1,PLS-2}. In \cite{RIS}, the reflection-only RIS empowered  scheme was proposed to improve the downlink communication performance. In \cite{HuangCW}, the energy efficiency maximization was achieved by designing the transmit power at the base station (BS) and phase shifts at the reflection-only RIS. In \cite{Dai}, the reflection-only RIS was utilized in a cell-free network to assist the transmission from distributed BSs to multiple users.
In \cite{localization1} and \cite{localization2}, the reflection-only RIS assisted wireless localization scheme was proposed to improve the accuracy of wireless localization. Moreover, the reflection-only RIS was also applied to boost the  performance of   mobile  edge computing  systems \cite{edge-comp1} and enhance physical layer security \cite{PLS-1,PLS-2}.

Constrained by the reflection-only characteristics, the flexibility of network deployment in \cite{RIS,HuangCW,Dai,localization1,localization2,edge-comp1,PLS-1,PLS-2} cannot be guaranteed. For example, if the transmitter and receiver are deployed on different sides of the reflection-only RIS, the existence of the reflection-only RIS cannot assist the communications between the transmitter and receiver, and may be even an obstacle between them. In order to surmount this deficiency, the novel STAR-RIS \cite{STAR-RIS-introduction2}, also known as Intelligent Omni-Surface (IOS) \cite{IOS},  was proposed. In each element of the STAR-RIS, a parallel resonant LC tank and small metallic loops are adopted to afford the desired electric and magnetic surface reactance. Then, by varying the bias voltages to the integrated varactors, the electric and magnetic surface reactance of each element is adjusted to achieve the simultaneous transmission and reflection independently \cite{STAR-RIS-introduction2}. The incident signal received at each STAR-RIS element can  not only  be reflected on one side (i.e., reflection region) but also  penetrate the surface on another side (i.e., transmission region). Thus, the $360^{\circ}$  coverage  can be achieved for STAR-RIS assisted communication systems. Motivated by this superior performance, STAR-RIS has been widely investigated \cite{STAR,ChannelEsti-STAR-RIS,STAR-RIS-MIMO,STAR-RIS-secrecy1,STAR-RIS-secrecy3,IOSCell}.
 In \cite{STAR}, a STAR-RIS aided downlink communication system was studied, in which three operating protocols were proposed and the system performance in  terms of transmit power under these protocols was further compared. In \cite{ChannelEsti-STAR-RIS}, the authors designed an efficient uplink channel estimation algorithm for STAR-RIS aided systems and  demonstrated that the proposed algorithm is more suitable for the time-switching protocol. In \cite{STAR-RIS-MIMO}, a  STAR-RIS assisted (multiple-input  multiple-output, MIMO) system was investigated, in which the sum-rate maximization problem was investigated in both unicasting and broadcasting signal models.   Due to the superior ability of reconstructing the radio environment, STAR-RIS is also an efficient technique for  establishing reliable secure transmissions \cite{STAR-RIS-secrecy1,STAR-RIS-secrecy3}. In \cite{IOSCell}, a STAR-RIS aided indoor transmission scheme was proposed to reduce the inter-cell interference.

In the above works \cite{localization1,localization2,edge-comp1,PLS-1,PLS-2,STAR-RIS-introduction2,IOS,STAR,ChannelEsti-STAR-RIS,STAR-RIS-MIMO,STAR-RIS-secrecy1,STAR-RIS-secrecy3}, both reflection-only RIS and STAR-RIS were considered to assist  information transmission of existing communication systems. However, the RIS  may also need to transmit its own information, e.g., the  humidity and temperature sensed by its embedded sensors, to a desired receiver.  This novel functionality for supporting information delivery of both primary transmission (i.e., information transmission of existing communication systems) and secondary transmission (i.e., its own information transmission) is termed as a symbiotic radio (SR) paradigm \cite{LiangSurvey}, which was inspired by the concept of the biological ecosystem \cite{LiangMag}. A conventional SR system was empowered by a  backscattering device, which  transmits its own information by passively reflecting the primary signals \cite{GuoWCL,LongIoT}. To enable the practical applications of the SR paradigm, the parasitic SR and commensal SR (CSR) setups were proposed in \cite{GuoWCL} and \cite{LongIoT}, where the relationship between primary and secondary transmissions was  revealed. However, the conventional SR system generally suffers from the poor system performance due to the low communication efficiency of  backscattering devices. To break this bottleneck,
the RIS enabled SR transmission is considered as a promising technology to achieve  satisfying spectrum and energy efficiencies, and thus motivates the wide investigations in the literature \cite{RIS-inf-transfer1,RIS-inf-transfer2,RIS-inf-transfer3,QianqianZhang,RIS-SR2,RIS-SR3,RIS-SR4,RIS-SR6}.  In \cite{RIS-inf-transfer1} and \cite{RIS-inf-transfer2}, a  spatial modulation  scheme was implemented at the reflection-only RIS to enhance the  primary transmission and build the secondary transmission in the single-user single-input multiple-output (SIMO)  and  multi-user MIMO systems, respectively.  In \cite{RIS-inf-transfer3}, the authors proposed a novel reflection pattern modulation scheme for the reflection-only RIS, where a part of elements was activated for constructing a  passive beamforming for performance enhancement, and the selection of the ON-state elements was utilized to modulate its own information. In \cite{QianqianZhang}, the CSR setup  was investigated for reflection-only RIS enabled MIMO systems, in which the duration of secondary transmission is much larger than that of the primary transmission and the mutualism relationship between the two types of transmissions can be achieved. The scenario in \cite{QianqianZhang} was further extended to a general scenario with multiple primary users (PUs) in \cite{RIS-SR2}.  In  \cite{RIS-SR3} and \cite{RIS-SR4}, multiple  reflection-only RISs were utilized for achieving simultaneous primary and secondary transmissions. In \cite{RIS-SR6}, the authors analyzed the capacity of reflection-only RIS aided MIMO SR  systems, in which the reflection patterns followed a  non-uniform and discrete  distribution for capacity enhancement and additional information delivery.

In most of current works, e.g., \cite{RIS-inf-transfer1,RIS-inf-transfer2,RIS-inf-transfer3,QianqianZhang,RIS-SR2,RIS-SR3,RIS-SR4,RIS-SR6}, the RIS deployed in SR systems was limited to the reflection-only RIS. Under this setup, the PUs and secondary users (SUs) have to be deployed on the same side of the RIS, which may  not be  flexible for network deployment in practice. However, this deployment cannot be achieved by implementing the reflection-only RIS, which motivates our investigation on STAR-RIS enabled SR systems. To the best of our knowledge, it is worth noting that no works have investigated STAR-RIS enabled SR systems yet.

In this paper, we study a STAR-RIS enabled SR system for guaranteeing a more flexible network deployment. Specifically, the BS and multiple PUs are distributed on the same side of STAR-RIS, while the SUs are located on another side. The STAR-RIS is deployed to assist both the primary transmission from the BS to multiple PUs and achieve its own information delivery to the SUs. We consider two practical signal models for the primary transmission, i.e., broadcasting signal model and unicasting signal model. For the broadcasting signal model, the BS transmits common information to all PUs. While, the independent information is transmitted to each PU for the  unicasting signal model.
For each model, we aim to minimize the transmit power at the BS by  optimizing the active beamforming at the BS and the reflection and transmission coefficients at the STAR-RIS under the practical  phase  correlation constraints \cite{STARcoupledphase}. Unlike \cite{RIS-inf-transfer1,RIS-inf-transfer2,RIS-inf-transfer3,QianqianZhang,RIS-SR2,RIS-SR3,RIS-SR4,RIS-SR6}, the implementation  of STAR-RIS  results in the joint design of reflection and transmission coefficients, and thus the  methods proposed in \cite{RIS-inf-transfer1,RIS-inf-transfer2,RIS-inf-transfer3,QianqianZhang,RIS-SR2,RIS-SR3,RIS-SR4,RIS-SR6} are inappropriate for  the considered problems in this work. Moreover,  the  imperfect successive interference cancellation (SIC) is considered at the SUs, which can avoid the design mismatch caused by the perfect SIC assumption as in \cite{RIS-inf-transfer1,RIS-inf-transfer2,RIS-inf-transfer3,QianqianZhang,RIS-SR2,RIS-SR3,RIS-SR4,RIS-SR6}.
 Furthermore, different from \cite{STAR,ChannelEsti-STAR-RIS,STAR-RIS-MIMO,STAR-RIS-secrecy1,STAR-RIS-secrecy3,IOSCell}, the tradeoff between the quality of service (QoS) requirements at the PUs and SUs poses a new challenge to our work, especially for the unicasting signal model. Alternatively, our formulated problems are more hard to solve due to the combination of challenges faced by \cite{STAR,ChannelEsti-STAR-RIS,STAR-RIS-MIMO,STAR-RIS-secrecy1,STAR-RIS-secrecy3,IOSCell}  and \cite{RIS-inf-transfer1,RIS-inf-transfer2,RIS-inf-transfer3,QianqianZhang,RIS-SR2,RIS-SR3,RIS-SR4,RIS-SR6}. The main contributions of this paper are summarized as follows:

	\begin{itemize}
		\item{We propose a STAR-RIS enabled transmission scheme for SR systems to improve the system communication efficiency and achieve  the full space coverage. Specifically, the STAR-RIS is utilized not only to enhance the primary transmission by beaming the desired signals from the BS towards the PUs, but also as a secondary transmitter to realize the secondary information  transmission to the SUs by modulating  the primary signal. To meet different application requirements, we consider both broadcasting signal model and unicasting signal model for the primary transmission.  Moreover, we take into account the imperfect SIC for signal decoding at the SUs.}
		
		\item{We first investigate the transmit power minimization problem for the BS in the broadcasting signal model. To address the  challenge caused by the coupled variables, 
		 we propose a  block coordinate descent (BCD) based algorithm, which decomposes the  original problem into two sub-problems. For the sub-problem of the active beamforming optimization, the optimal solution can be obtained by transforming the original sub-problem into a convex semi-definite program (SDP) problem. For the sub-problem of reflection and transmission coefficients optimization, we propose to use the penalty dual decomposition (PDD) and successive convex approximation (SCA) methods to derive a near-optimal solution.}

		\item{We then minimize the transmit power  at the BS for the unicasting signal model, which is much challenging to solve due to the increased variables and complex constraints. To solve this problem efficiently, we extend the proposed algorithm for the broadcasting signal model. Specifically, we reformulate the active beamforming sub-problem by exploiting a diagonal matrix structure for independent primary signal transmissions and prove that the rank of the  optimal solution obtained is equal to the number of PUs. While, the PDD and SCA methods are also used to find the solution of the reflection and transmission coefficients.  }

		\item{Through numerical results, we  confirm the convergence efficiency of the proposed BCD based algorithm and show the impact of the phase correlation constraints on the design of  phase shifts. Then, we demonstrate  that the proposed STAR-RIS enabled scheme can reduce up to 150.6\% transmit power at the BS compared to the baseline schemes.}
		
	\end{itemize}

	 The rest of this paper is  organized as follows. In Section \ref{sysmodel}, we describe the STAR-RIS enabled SR system model. The transmit power minimization problem formulation and solving process for the broadcasting signal model and  the unicasting signal model are presented in Section \ref{Section3} and Section \ref{section4}, respectively.  Numerical results  are demonstrated in Section  \ref{section5} to confirm the effectiveness of the proposed scheme. Section \ref{section6} concludes  this paper.

	\section{System Model }
	\label{sysmodel}
 	As shown in Fig. \ref{SystemModel}, we consider a STAR-RIS aided SR communication system,  which consists of one BS with $N$ antennas, $K$ single-antenna PUs, $Q$ single-antenna  SUs, and one STAR-RIS with $M$ elements.  We consider that  the BS and $K$ PUs are located  in the reflection region of the STAR-RIS, while the SUs are located in the transmission region.\footnote{The system model can be extended to the scenario that the PUs and SUs are on both sides of the STAR-RIS, and the proposed algorithms in this paper are still applicable.} This is a practical scenario where  the STAR-RIS  is embedded on the surface of a building for supporting the  outdoor and indoor transmissions. For example, the PUs and SUs can be logistics devices in the outdoor and smart home controllers in the indoor, respectively. In this scenario, the STAR-RIS is used to enhance the outdoor transmission (i.e., primary transmission) from the BS to the PUs for  updating the information of recipients by exploiting its reflection characteristic, and at the same time to establish the indoor transmission  (i.e., secondary transmission) to deliver the environmental information to the SUs  for controlling smart home equipment. 
 	We consider the CSR setup for achieving  mutualistic mechanism between the primary and secondary transmissions.  Let  $T_s$ and $T_c$ be the periods of the primary and secondary signals. In the CSR setup, the secondary symbol period is a great many times that of the primary symbol period, i.e., $T_c=L T_s$, where $L \gg 1$ \cite{LongIoT}.

Let $\mathcal{K} =  \{ 1,\ldots,K \} $, $\mathcal{Q} =  \{ 1,\ldots,Q \} $ and $\mathcal{M} = \{1,\ldots,M\}$ represent  the set of PUs, the set of SUs and the set of STAR-RIS elements, respectively. We denote $\bm{h}_{p,k}^H \in \mathbb{C}^{1\times N}$, $\bm{h}_{s,q}^H \in \mathbb{C}^{1\times N}$ and $\bm {F} \in \mathbb{C}^{M\times N} $ as  the complex baseband equivalent  channels from the BS to the $k$-th PU, from the BS to the $q$-th SU, and from the BS to the STAR-RIS, respectively, where $k \in \mathcal{K}$ and $q \in \mathcal{Q}$. The complex baseband equivalent channels  from the STAR-RIS to the $k$-th PU and from the STAR-RIS to the $q$-th SU are denoted by  $\bm {g}_{p,k}^H \in \mathbb{C}^{1\times M}$  and $\bm {g}_{s,q}^H \in \mathbb{C}^{1\times M}$, respectively. Similar to \cite{STAR-RIS-secrecy1} and \cite{QianqianZhang}, we consider that  the channels follow the quasi-static fading, in which the CSI is invariable during one secondary symbol period.  Specifically, the related CSI can be obtained by using the methods proposed in \cite{ChannelEsti-STAR-RIS} and \cite{ChannelEstimation} before information transmissions.

	\begin{figure}
		\centering
		\includegraphics[width=0.4 \linewidth]{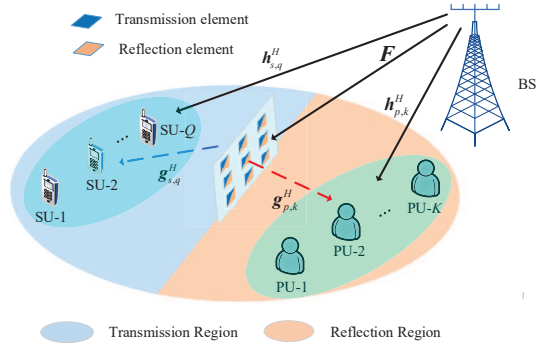}
		\caption{A STAR-RIS enabled SR communication system.}
		\label{SystemModel}
	\end{figure}

	\subsection{Energy Splitting Protocol for STAR-RIS}

We consider that the  practical energy splitting (ES) operating protocol \cite{STAR-RIS-introduction2} is adopted. Based on this protocol, each element of the STAR-RIS splits its received signal into reflected signal and transmitted signal by controlling the ES factors, i.e., the reflection coefficient and transmission coefficient \cite{STAR}, which is reconfigured by adjusting each element's electric and magnetic currents \cite{STAR-RIS-introduction2}.
	Let $v_m^r = \sqrt{\beta_m^r} e^{j\theta_m^r}$ and $v_m^t =\sqrt{\beta_m^t} e^{j\theta_m^t}$ represent the reflection coefficient and transmission coefficient of the $m$-th element of STAR-RIS, respectively.  $\sqrt{\beta_m^r} \in[0,1]$ and  $\sqrt{\beta_m^t} \in [0,1]$ are the amplitude coefficients of the $m$-th element for  reflection and transmission, respectively.  $\theta_m^r \in [0,2\pi)$ and $\theta_m^t \in [0,2\pi)$ are the corresponding phase shifts of the $m$-th element. According to \cite{STARcoupledphase} and \cite{coupledsolution}, it is known that the reflection coefficients and transmission coefficients of the STAR-RIS need to satisfy the following constraints   
	  \begin{align}
	  \label{PowerLawTwo}
	  &\beta_m^r+\beta_m^t=1,~ m \in \mathcal{M},\\
	  \label{PhaseCorrelation}
	  &\cos(\theta_m^r-\theta_m^t)=0,~ m \in \mathcal{M}. 
	  \end{align}
	  % \eqref{PowerLawTwo} indicates that the summation of the reflection power and transmission power is equal to the incident signal power at the $m$-th element, and \eqref{PhaseCorrelation} represents the zero-valued real part of the electric and magnetic impedance of the $m$-th element \cite{coupledsolution}.

For the primary transmission, we consider two signal transmission models, i.e., the broadcasting signal model and the unicasting signal model. For the broadcasting signal model, the BS transmits a common data stream to all the PUs, a typical application  of which is the TV service. For the unicasting signal model,  the BS transmits an independent signal to each PU for supporting diversified service requirements.

\subsection{Broadcasting  Signal Model }
In  the broadcasting signal model, we  denote the common primary symbol as $s(l)$, where $s(l) \sim \mathcal{CN}(0,1)$, $l \in \mathcal{L}$, and $\mathcal{L} = {\left\lbrace 1,\ldots,L \right\rbrace }$ is the set of all primary symbols  in a secondary symbol period.  Considering the passive characteristic of the STAR-RIS, the  binary phase shift keying (BPSK) scheme is adopted to  modulate the secondary symbol, which is denoted by $c$ and achieved by adjusting the phase shifts.\footnote{It is worth noting that the STAR-RIS does not need to generate symbols actively for the secondary transmission \cite{RIS-inf-transfer1,RIS-inf-transfer2,RIS-inf-transfer3,QianqianZhang,RIS-SR2,RIS-SR3,RIS-SR4,RIS-SR6}.} Thus, we have $c \in {\left\lbrace 1, -1\right\rbrace }$ \cite{QianqianZhang,RIS-SR4}.
Let $\bm w$ be the active beamforming vector at the BS, and the transmitted primary signal at the BS can be expressed as $\bm w s(l)$. Then, the received signal at the $k$-th PU during the  secondary symbol period can be expressed as 
	\begin{align}
		\label{MultiUser_Rec}
		y_{b,k}^{(p)}(l)=\bm{h}_{p,k}^H \bm{w} s(l)+\bm {g}_{p,k}^H \bm{\Theta}_r \bm{F}\bm{w} s(l)c+z_{p,k}(l),~k\in \mathcal{K},
	\end{align}
	where  $z_{p,k}(l) \sim \mathcal{CN}(0,\sigma_{p,k}^2)$ is the additive Gaussian white noise (AWGN) at the $k$-th PU, and $ \bm{\Theta}_r=\text{diag}(\bm {v}_r) $ is the reflection coefficient matrix  with $\bm {v}_r=[v_1^r, \ldots, v_M^r]^T$.\footnote{According to \cite{synchronization}, the impact of the imperfect synchronization on the  detection performance in the CSR setup is slight. Thus, similar to \cite{QianqianZhang,RIS-SR2,RIS-SR3,RIS-SR4,RIS-SR6}, we consider the perfect synchronization between the direct links and STAR-RIS related links.}
	
	In the CSR setup, $s(l)$ can be regarded as being transmitted through the equivalent channel $ {\bm{h}_{p,k}^H} + \bm {g}_{p,k}^H \bm{\Theta}_r \bm Fc $ \cite{LongIoT}. From \eqref{MultiUser_Rec}, the signal-to-noise-ratio (SNR) for decoding $s(l)$ at the $k$-th PU is given by
	% \begin{align}
	% 	\label{Multi_SNR_uerk}
		$\gamma_{p,k}(c)={\left|\bm{h}_{p,k}^H \bm w+\bm {g}_{p,k}^H \bm{\Theta}_r \bm F\bm w c \right| ^2}/{\sigma_{p,k}^2}, k\in \mathcal{K}$,	
	% \end{align}
	which is a function  with respect to   $c$. If $L \gg 1$, the achievable rate of $s(l)$ at the $k$-th PU in the broadcasting signal model can be approximately expressed as \cite{Tse,QianqianZhang}
	\begin{align}
		\label{multi_userk_Rate}
		R_{b,k}^{(p)} &= \mathbb{E}_{c} \left[\log_2(1+\gamma_{p,k}(c)) \right] =\frac{1}{2} \log_2(1+{\left|\bm{h}_{p,k}^H \bm w+\bm {g}_{p,k}^H \bm{\Theta}_r \bm F\bm w  \right| ^2}/{\sigma_{p,k}^2}) \nonumber \\
		&+\frac{1}{2} \log_2(1+{\left|\bm{h}_{p,k}^H \bm w-\bm {g}_{p,k}^H \bm{\Theta}_r \bm F\bm w  \right| ^2}/{\sigma_{p,k}^2}),~k \in \mathcal{K}.
	\end{align}
By exploiting the  characteristic of the STAR-RIS, the signal received by the $q$-th SU is written as 
	\begin{align}
		\label{Multi_SU_Rec}
		y_{b,q}^{(s)}(l)=\bm{h}_{s,q}^H \bm{w} s(l)+\bm {g}_{s,q}^H \bm{\Theta}_t \bm{F}\bm{w} s(l)c+z_{s,q}(l),~q\in\mathcal{Q},
	\end{align}
	where $ \bm{\Theta}_t=\text{diag}(\bm {v}_t) $ is the transmission coefficient matrix  with $\bm {v}_t=[v_1^t, \ldots, v_M^t]^T$,
	$z_{s,q}(l)\sim \mathcal{CN}(0,\sigma_{s,q}^2)$ is the AWGN at the $q$-th SU. Similar  to \eqref{multi_userk_Rate}, the achievable rate for decoding $s(l)$ at the $q$-th SU in the broadcasting signal model is approximately expressed as 
	\begin{align}
		\label{Multi_SNRandRate_SU}
		% \gamma_{s}(c)&=\frac{\left|\bm{h}_s^H \bm w+\bm {g}_s^H \bm{\Theta}_t \bm F\bm w c \right| ^2}{\sigma_s^2}	\\
		R_{b,q}^{(s)} = \frac{1}{2} \log_2(1+ {\left|\bm{h}_{s,q}^H \bm w+\bm {g}_{s,q}^H \bm{\Theta}_t \bm F\bm w  \right| ^2}/{\sigma_{s,q}^2}) 
		+\frac{1}{2} \log_2(1+{\left|\bm{h}_{s,q}^H \bm w-\bm {g}_{s,q}^H \bm{\Theta}_t \bm F\bm w  \right| ^2}/{\sigma_{s,q}^2}),~q\in\mathcal{Q}.
	\end{align}  
Under the condition that each SU can decode $s(l)$ successfully, the SIC technique is then adopted to remove the first term of \eqref{Multi_SU_Rec}, i.e., $\bm{h}_{s,q}^H \bm w s(l) $. {Considering the imperfect SIC,  the first term of \eqref{Multi_SU_Rec} related with $s(l)$ may not be removed completely. Thus, the intermediate signal in the secondary symbol period over $L$ primary symbols, denoted by $\hat{\bm{y}}_{b,q}=[\hat{{y}}_{b,q}(1),\hat{{y}}_{b,q}(2),\ldots,\hat{{y}}_{b,q}(L)]^T$,   is expressed as
	\begin{align}
		\label{Multi_SU_imtermediate_Rec}
		\hat{\bm{y}}_{b,q}=\bm {g}_{s,q}^H \bm{\Theta}_t \bm{F}\bm{w} \bm{s} c+ \hat{\bm{s}}_q+ \bm{z}_{s,q},~q\in \mathcal{Q},
	\end{align}
where $\bm{s} = [s(1),s(2),\ldots,s(L)]^T$, $\hat{\bm{s}}_q = [\hat{s}_q(1),\hat{s}_q(2),\ldots,\hat{s}_q(L)]^T$,
$\hat s_q(l)\sim \mathcal{CN}(0,\mu \left| \bm{h}_{s,q}^H \bm w  \right|^2)$ is the residual interference caused by the imperfect SIC, $\mu \in [0,1]$  is the  imperfect SIC coefficient  \cite{ImSIC}, and $\bm{z}_{s,q}= [{z}_{s,q}(1),{z}_{s,q}(2),\ldots, {z}_{s,q}(L)]^T$.
 By applying the MRC technique to decode $c$ \cite{QianqianZhang}, the corresponding  signal-to-interference-plus-noise-ratio (SINR)  at the $q$-th SU in the broadcasting signal model  is approximately formulated as
	\begin{align}
		\label{multi_gamma_c}
		\gamma_{b,q}={L \left| \bm {g}_{s,q}^H \bm{\Theta}_t \bm F\bm w\right| ^2}/{(\mu\left| \bm{h}_{s,q}^H \bm w  \right|^2 +\sigma_{s,q}^2)},~q\in \mathcal{Q},
	\end{align}
	where $L \gg 1$.}
	
	\subsection{Unicasting Signal Model}
	In the unicasting signal model, the BS simultaneously transmits independent signals to all PUs. Let $\bm w_k $ and $s_k(l)$ represent the active beamforming vector  and desired signal for $k$-th PU. The superimposed signal at the BS is expressed as $\sum_{k=1}^K \bm{w}_k s_k(l)$. Then, the received signal at the $k$-th PU from the BS is formulated as 
	\begin{align}
		\label{broadcast_User_Rec}
		y_{u,k}^{(p)}(l)&= \bm{h}_{p,k}^H \bm{w}_k s_k(l)+\bm {g}_{p,k}^H \bm{\Theta}_r \bm{F}\bm{w}_k s_k(l)c +\sum_{i=1,i\neq k}^{K}\bm {h}_{p,k}^H \bm{w}_i s_i(l)+\sum_{i=1,i\neq k}^{K}\bm {g}_{p,k}^H \bm{\Theta}_r \bm{F}\bm{w}_i s_i(l)c+z_{p,k}(l). 
	\end{align}
	The first two terms of \eqref{broadcast_User_Rec} represent the desired part for the $k$-th PU, while the third and  fourth terms correspond to the interference. From \eqref{broadcast_User_Rec}, the SINR for decoding $s_k(l)$ at the $k$-th PU   is given by
	\begin{align}
		\label{broadcast_SNR_uerk}
		\gamma_{u,k}^{(p)}(c)&=\frac{\left|\bm{h}_{p,k}^H \bm w_k+\bm {g}_{p,k}^H \bm{\Theta}_r \bm F\bm w_k c \right| ^2}{\sum_{i=1,i\neq k}^{K} \left|\bm{h}_{p,k}^H \bm w_i \right| ^2+\sum_{i=1,i\neq k}^{K} \left|\bm {g}_{p,k}^H \bm{\Theta}_r \bm F\bm w_i \right| ^2 +\sigma_{p,k}^2}, k\in \mathcal{K}.
	\end{align}
	Similar to \eqref{multi_userk_Rate}, the achievable rate of the $k$-th PU in the unicasting signal model is approximately  as 
		\begin{align}
			\label{broadcast_user_rate}
				&{R_{u,k}^{(p)}} =\mathbb{E}_{c}\left[\log_2(1+ {\gamma_{u,k}^{(p)}}(c)) \right] 
				=\frac{1}{2} \log_2(1+\frac{\left|\bm{h}_{p,k}^H \bm w_k+\bm {g}_{p,k}^H \bm{\Theta}_r \bm F\bm w_k \right| ^2}{\sum_{i=1,i\neq k}^{K} \left|\bm{h}_{p,k}^H \bm w_i \right| ^2+\sum_{i=1,i\neq k}^{K} \left|\bm {g}_{p,k}^H \bm{\Theta}_r \bm F\bm w_i \right| ^2 +\sigma_{p,k}^2}) \nonumber \\
				&~~~~~~~~~~~~~~ + \frac{1}{2}\log_2(1+\frac{\left|\bm{h}_{p,k}^H \bm w_k-\bm {g}_{p,k}^H \bm{\Theta}_r \bm F\bm w_k  \right| ^2}{\sum_{i=1,i\neq k}^{K} \left|\bm{h}_{p,k}^H \bm w_i \right| ^2+\sum_{i=1,i\neq k}^{K} \left|\bm {g}_{p,k}^H \bm{\Theta}_r \bm F\bm w_i \right| ^2 +\sigma_{p,k}^2}), ~k\in \mathcal{K}.
		\end{align} 

 The received signal at the $q$-th SU in the unicasting signal model is formulated as
	\begin{align}
		\label{broadcast_SU_Rec}
		y_{u,q}^{(s)}(l)&= \sum_{k=1}^{K}\bm{h}_{s,q}^H \bm{w}_k s_k(l)+\sum_{k=1}^{K}\bm {g}_{s,q}^H \bm{\Theta}_t \bm{F}\bm{w}_k s_k(l)c+z_{s,q}(l),~q\in \mathcal{Q}.
	\end{align}
It is known that all the SUs aims to decode the secondary signal $c$. However, before that, the decoding of $ s_k(l)$ is necessary. To achieve this goal, the SIC technique is again adopted. {According to \cite{RIS-SR4} and \cite{SICorder2}, the decoding order of the primary signals in \eqref{broadcast_SU_Rec} can be determined by their corresponding channel gains. In particular, the primary signal with the largest channel gain will be decoded first. Then, the successfully decoded signal is removed from \eqref{broadcast_SU_Rec}.  Considering the imperfect SIC,  the first term of \eqref{broadcast_SU_Rec} related with the previously decoded signal is not eliminated completely. Then, the signal with the largest channel gain in the next round is decoded by taking into account the residual interference and noise. The above process ends when all signals are decoded. Without loss of generality, we consider that $s_i(l)$ is decoded before $s_j(l)$, where $i<j$. The SINR for decoding $s_k(l)$ at the $q$-th SU in the unicasting signal model is formulated as
	\begin{align}
		\label{broadcast_SNR_SU}
		\gamma_{q,k}^{(s)}(c)&=\frac{\left|\bm{h}_{s,q}^H \bm w_k+\bm {g}_{s,q}^H \bm{\Theta}_t \bm F\bm w_k c \right| ^2}{\sum_{i=1,i\neq k}^{K} \hat \mu_i \left|\bm{h}_{s,q}^H \bm w_i \right| ^2+\sum_{i=1,i\neq k}^{K} \left|\bm {g}_{s,q}^H \bm{\Theta}_t \bm F\bm w_i \right| ^2 +\sigma_{s,q}^2}, ~k\in \mathcal{K},~q\in \mathcal{Q},	
	\end{align}
where $\hat \mu_i=\left\{ {\begin{array}{*{20}{c}}
		{\mu_i ,i < k}\\
		{1,~i > k}
\end{array}} \right.$, $\mu_i \in [0,1]$ is the imperfect SIC coefficient. Based on \eqref{broadcast_SNR_SU}, the  achievable rate of  decoding $s_k(l)$ at the $q$-th SU in the unicasting signal model is written as \eqref{broadcast_SU_rate1}, which is given by
	\begin{align}
		\label{broadcast_SU_rate1}
		R_{q,k}^{(s)} &=\mathbb{E}_{c}\left[\log_2(1+\gamma_{q,k}^{(s)}(c)) \right] =\frac{1}{2} \log_2(1+\frac{\left|\bm{h}_{s,q}^H \bm w_k+\bm {g}_{s,q}^H \bm{\Theta}_t \bm F\bm w_k  \right| ^2}{\sum_{i=1,i\neq k}^{K}(\hat \mu_i \left|\bm{h}_{s,q}^H \bm w_i \right| ^2+\left|\bm {g}_{s,q}^H \bm{\Theta}_t \bm F\bm w_i \right| ^2) +\sigma_{s,q}^2}) \nonumber \\
		&+ \frac{1}{2}\log_2(1+\frac{\left|\bm{h}_{s,q}^H \bm w_k-\bm {g}_{s,q}^H \bm{\Theta}_t \bm F\bm w_k  \right| ^2}{\sum_{i=1,i\neq k}^{K} (\hat \mu_i \left|\bm{h}_{s,q}^H \bm w_i \right| ^2+\left|\bm {g}_{s,q}^H \bm{\Theta}_t \bm F\bm w_i \right| ^2) +\sigma_{s,q}^2)}, ~k\in \mathcal{K},q\in \mathcal{Q}.
		\end{align}

	Under the condition that  each SU  can decode all the primary signals,  the remaining intermediate signal at the $q$-th SU  is recast as
	\begin{align}
		\label{broadcast_intermed_SU}
		\hat {y}_{u,q}(l)=\sum_{k=1}^{K}\bm {g}_{s,q}^H \bm{\Theta}_t \bm{F}\bm{w}_k s_k(l)c+\tilde s_{q}(l) +z_{s,q}(l),~q \in \mathcal{Q},
	\end{align}
	where $\tilde s_{q}(l) \sim \mathcal{CN}(0,\sum_{k=1}^{K}\mu_k \left| \bm{h}_{s,q}^H \bm w_k  \right|^2)$ is the residual interference caused by the imperfect SIC. Similar to \eqref{multi_gamma_c}, the  SINR for decoding $c$ at the $q$-th SU  in the unicasting signal model  is formulated as
	\begin{align}
		\label{broadcast_gamma_c}
		\gamma_{u,q}=\frac{L \sum_{k= 1}^{K}\left| \bm {g}_{s,q}^H \bm{\Theta}_t \bm F\bm w_k\right| ^2}{\sum_{k=1}^{K}\mu_k \left| \bm{h}_{s,q}^H \bm w_k  \right|^2+\sigma_{s,q}^2},~q\in \mathcal{Q}.
	\end{align} }

	\section{Transmit Power Minimization For Broadcasting Signal Model}
	\label{Section3}
	For the broadcasting signal model, we investigate the minimization problem of the transmit power at the BS by considering the joint design of  the active beamforming at the BS and the reflection and transmission coefficients at the STAR-RIS. Let $R_{b,\text{min}} $ and $\gamma_{b,\text{min}}$ represent the minimum achievable rate requirement of decoding $s_k(l)$ and the minimum SNR requirement for decoding $c$ for the  broadcasting signal model, respectively.
	Then,   the optimization problem for this model is formulated as 
	\begin{equation}\tag{$\textbf{P1}$} 
		\begin{aligned}
			\min_{\bm{w},\bm{v}_r,\bm{v}_t} ~~&  \|\bm{w} \|^2 \\ 
			\text{s.t.}~~ & \text{C1:~} R_{b,k}^{(p)} \ge {R}_{b,\text{min}},~ k\in\mathcal{K},  \\
			& \text{C2:~} R_{b,q}^{(s)} \ge {R}_{b,\text{min}},~ q\in \mathcal{Q},  \\
			& \text{C3:~} \gamma_{b,q} \ge  {\gamma}_{b,\text{min}},~ q\in \mathcal{Q},  \\
			& \text{C4:~}0 \le \beta_m^r, \beta_m^t \le 1, ~ m \in \mathcal{M}, \\
			& \text{C5:~} 0 \le{\theta_m^r, \theta_m^t} < 2 \pi, ~ m \in \mathcal{M},\\
			& \text{C6:~} \beta_m^r+\beta_m^t=1,~ m \in \mathcal{M},\\
			& \text{C7:~} \cos(\theta_m^r-\theta_m^t)=0, ~ m \in \mathcal{M}.
		\end{aligned}
	\end{equation}
 In \textbf{P1}, C1 and C2 indicate that the achievable rate of decoding $s_k(l)$ should satisfy the QoS constraint, C3 represents the QoS constraint of decoding $c$ at the SUs, C4 and C6 limit the value range of the reflection and transmission amplitude coefficients, C5 is the constraint on the  phase shifts of reflection and transmission, and C7 is the phase correlation constraint.

	It is straightforward to find that \textbf{P1} is  a non-convex problem due to the coupling of variables (i.e., $\bm w$, $\bm {v}_r$, and $\bm {v}_t $ ) in the constraints C1, C2 and C3. In addition, the coupling amplitude and phase shift constraints shown in C6 and C7 make the solving of  \textbf{P1} more challenging. Generally, it is very hard to obtain the globally optimal solution to \textbf{P1} due to its non-convexity. It is known that the BCD framework is  a promising way to address the non-convexity and can guarantee a near-optimal solution  by alternately solving several convex sub-problems derived from the originally non-convex problem. Thus, we propose  a  BCD based algorithm to solve \textbf{P1}, based on which we decompose  the variables into two blocks, i.e., $\{\bm{w} \}$ and $\{\bm {v}_r, \bm{v}_t \}$, and optimize them iteratively in an alternating manner. Specifically, the optimal active beamforming (i.e., $\{\bm{w} \}$) can be found by transforming the original problem into a semi-definite program (SDP). While, for the sub-problem of optimizing the reflection and transmission coefficients (i.e., $\{\bm {v}_r, \bm{v}_t \}$),  the PDD framework \cite{coupledsolution} with the SCA technique is utilized. Based on this framework, the sub-problem is transformed into an augmented Lagrangian problem by considering the penalty term, for which the variables and introduced penalty factor are also iteratively optimized until the convergence is achieved.
	
	\subsection{Active Beamforming Optimization}
	In this subsection, we  first focus on the optimization of active beamforming with  $\bm {v}_r$ and $\bm {v}_t$ being fixed.  The sub-problem for optimizing the active beamforming is given by
	\begin{equation}\tag{$\textbf{P2}$} 
		\begin{aligned}
			\min_{\bm{w}} ~~&  \|\bm{w} \|^2 \\ 
			\text{s.t.}~~ & \text{C1},~ \text{C2},~ \text{C3}.
		\end{aligned}
	\end{equation}

% {\color{blue}Similar to \cite{QianqianZhang} and \cite{RIS-SR2},  the objective function of \textbf{P2} can be  recast as  a tractable form, i.e., $\text{Tr}(\bm{W})$,  where $\bm W= \bm w \bm {w}^H\in \mathcal{C}^{N \times N}$  is a semi-definite matrix with $\text{Rank} (\bm{W})  = 1$. Then, 
% by transforming \textbf{P2} into a convex SDP, the optimal solution, denoted by $\bm w^*$, can be obtained. Due to the limited space, the solving process of \textbf{P2} is neglected. For more details, please refer to \cite{QianqianZhang} and \cite{RIS-SR2}.}

To address the non-convexity of \textbf{P2},  we first  transform its objective function  into a tractable form by letting $\bm W= \bm w \bm {w}^H\in \mathcal{C}^{N \times N}$, where $\bm W$  is a semi-definite matrix with $\text{Rank} (\bm{W})  = 1$. To make the constraints concise, let $\bm {e}_{1,k}^H=\bm{h}_{p,k}^H +\bm {g}_{p,k}^H \bm{\Theta}_r \bm F $, $\bm {e}_{2,k}^H=\bm{h}_{p,k}^H -\bm {g}_{p,k}^H \bm{\Theta}_r \bm F $, $\overline{ \bm {e}}_{1,q}^H=\bm{h}_{s,q}^H +\bm {g}_{s,q}^H \bm{\Theta}_t \bm F$, and $\overline{\bm {e}}_{2,q}^H=\bm{h}_{s,q}^H -\bm {g}_{s,q}^H \bm{\Theta}_t \bm F$. Based on these auxiliary variables, we further let  $\bm E_{1,k}=\bm {e}_{1,k}\bm {e}_{1,k}^H$, $\bm E_{2,k}=\bm {e}_{2,k}\bm {e}_{2,k}^H$, $\overline{\bm E}_{1,q}=\overline{\bm{e}}_{1,q}\overline{\bm{e}}_{1,q}^H$, $\overline{\bm E}_{2,q}=\overline{\bm{e}}_{2,q} \overline{\bm{e}}_{2,q}^H$, $\bm R_q =(\bm {g}_{s,q}^H \bm{\Theta}_t \bm F)^H(\bm {g}_{s,q}^H \bm{\Theta}_t \bm F)$, and $\bm H_q=\bm{h}_{s,q} \bm{h}_{s,q}^H$. By using the auxiliary variables and applying the SDR technique, \textbf{P2} can be transformed into a convex  SDP as 
	\begin{equation}\tag{${\textbf{P2.1}} $} 
		\begin{aligned}
			\min_{\bm{W}} ~~&  \text{Tr}(\bm W) \\ 
			\text{s.t.}~~ 
			\label{P2C1-3-transform}
		 &{\overline{\text{C1}}}:\sum_{i=1}^{2}\log_2(1+\frac{\text{Tr}(\bm E_{i,k}\bm W )}{\sigma_{p,k}^2})\ge 2{R}_{b,\min}, ~ k\in\mathcal{K}, \nonumber \\
		& \overline{\text{C2}}:\sum_{i=1}^{2}\log_2(1+\frac{\text{Tr}(\overline{\bm E}_{i,q}\bm W )}{\sigma_{s,q}^2})\ge 2{R}_{b,\min}, ~ q\in\mathcal{Q},\nonumber \\
		& \overline{\text{C3}}:\frac{L\text{Tr}(\bm R_q\bm{W} )}{\mu \text{Tr}(\bm{H}_q \bm{W})+\sigma_{s,q}^2}\ge { \gamma}_{b,\min},  ~ q\in\mathcal{Q}, \nonumber \\
		 &\text{C8:~} \bm W \succeq {0}.
		\end{aligned}
	\end{equation}
	\textbf{P2.1} can be solved by using the CVX \cite{CVX}. According to \cite {rank-one}, we can prove that  the optimal solution to \textbf{P2.1}, denoted by $\bm{W}^*$, is always rank-one. From this observation, we find that the optimal solution to \textbf{P2} can be obtained by solving \textbf{P2.1}. By applying the singular value decomposition (SVD) of $\bm{W}^*$, we can finally derive the optimal active beamforming $\bm w^*$ to \textbf{P2}. Moreover, this observation implies that the BS beams the transmitted signal to all the PUs towards a desired direction, which can guarantee the QoS constraints at the PUs and SUs and results in the minimum transmit power consumption at the BS.

	\subsection{Reflection and Transmission Coefficients Optimization }
	\label{multi_sectionB}
	After obtaining the optimal active beamforming $\bm w^*$, the sub-problem for optimizing the reflection and transmission coefficients derived from \textbf{P1} is expressed as 
	\begin{equation}\tag{$\textbf{P3}$} 
		\begin{aligned}
			&\text{Find} ~\{\bm{v}_r,\bm{v}_t\} \\ 
			& \text{s.t.}~~  \text{C1}-\text{C7}.
		\end{aligned}
	\end{equation}
	It is difficult to solve \textbf{P3} due to the coupled amplitude and phase shift constraints. 
According to \cite{coupledsolution}, auxiliary variables $ \tilde{\bm {v}}_r$ and $ \tilde{\bm {v}}_t$ can be introduced to tackle this challenge, where $ \tilde{\bm {v}}_r =[\tilde{v}_1^r,\ldots,\tilde{v}_M^r]^T$, $ \tilde{\bm {v}}_t =[\tilde{v}_1^t,\ldots,\tilde{v}_M^t]^T$, $ \tilde{v}_m^r =\sqrt{ \tilde {\beta}_m^r}e^{j \tilde{\theta}_m^r}$, and $ \tilde{v}_m^t =\sqrt{ \tilde {\beta}_m^t}e^{j \tilde{\theta}_m^t}, m \in \mathcal{M}$.   Then,  \textbf{P3} can be formulated as
	\begin{equation}\tag{$\textbf{P3.1}$} 
		\begin{aligned}
			&\text{Find} ~\{\bm{v}_r,\bm{v}_t,\tilde{ \bm{v}}_r,\tilde{\bm{v}}_t\} \\ 
			 \text{s.t.}~~& \text{C1}-\text{C6},\\
			& \text{C9:~} \tilde{ \bm{v}}_r=\bm{v}_r,\tilde{\bm{v}}_t=\bm{v}_t,\\
			& \text{C10:~} \tilde {\beta}_m^r+\tilde {\beta}_m^t=1,~m\in \mathcal{M},\\
			& \text{C11:~} \cos(\tilde{\theta}_m^r-\tilde{\theta}_m^t)=0,~m\in \mathcal{M}.
		\end{aligned}
	\end{equation}
In \textbf{P3.1}, the constraints C9-C11 are introduced to guarantee that the coupling phase shifts are satisfied.   To handle the equality constraint C9, the PDD framework \cite{PDD_framework} is utilized. Specifically, the penalty term associated with C9 is added as the objective function, based on which  \textbf{P3.1} is transformed as
	\begin{equation}\tag{$\textbf{P3.2}$} 
		\begin{aligned}
			\min_{{\bm{v}_r,\bm{v}_t,\tilde{ \bm{v}}_r,\tilde{\bm{v}}_t}} ~~& \frac{1}{2\rho}\sum_{\iota \in\{r,t\}}\|\tilde{ \bm{v}}_{\iota}-\bm{v}_{\iota}+\rho\bm{\lambda_{\iota}}   \| ^2 \\ 
			\text{s.t.}~~& \text{C1}-\text{C6},\\
			& \text{C10:~} \tilde {\beta}_m^r+\tilde {\beta}_m^t=1,~m\in \mathcal{M},\\
			& \text{C11:~} \cos(\tilde{\theta}_m^r-\tilde{\theta}_m^t)=0,~m\in \mathcal{M},
		\end{aligned}
	\end{equation}
	where $\rho$ is a nonnegative penalty factor,  $\bm{\lambda}_{\iota}$  represents the Lagrangian dual variables related to C9, and  $\iota \in \{r,t\}$. It is known that if $\rho \to 0$, we set the objective function to be zero to guarantee that C9 is satisfied. According to \cite{PDD_framework}, the alternating update of  $\{{\bm{v}_r,\bm{v}_t,\tilde{ \bm{v}}_r,\tilde{\bm{v}}_t}\}$, $\bm{\lambda}_{\iota}$, and $\rho$ will lead to the Karush–Kuhn–Tucker (KKT) conditions, based on which the optimal solution can be obtained. For the optimization of $\{{\bm{v}_r,\bm{v}_t,\tilde{ \bm{v}}_r,\tilde{\bm{v}}_t}\}$, \textbf{P3.2} can be divided into two sub-problems  associated with   $\{\bm{v}_r, \bm{v}_t \}$ and $ \{\tilde{ \bm{v}}_r, \tilde{ \bm{v}}_t \}$, respectively.

	\subsubsection[${v}_r, {v}_t $]{Optimization of $\{ \bm {v}_r,  \bm {v}_t \}$}
	Given $ \tilde{ \bm{v}}_r $, $ \tilde{ \bm{v}}_t $,  $\rho$, and $\bm{\lambda}_{\iota},\iota \in {\{r,t\}}$, the sub-problem of  optimizing $\{\bm{v}_r, \bm{v}_t \}$ derived from \textbf{P3.2} is written as
	\begin{equation}\tag{$\textbf{P3.3}$} 
		\begin{aligned}
			\min_{\bm{v}_r,\bm{v}_t} ~~& \frac{1}{2\rho}\sum_{\iota \in\{r,t\}}\|\tilde{ \bm{v}}_{\iota}-\bm{v}_{\iota}+\rho\bm{\lambda_{\iota}}   \| ^2 \\ 
			\text{s.t.}~~& \text{C1}-\text{C6}.
		\end{aligned}
	\end{equation}

To make the expressions concise, we let $\bm {g}_{p,k}^H \bm{\Theta}_r \bm F\bm w =\bm {g}_{p,k}^H\text{diag}(\bm F\bm w)\bm {v}_r$ and  {$\bm {g}_{s,q}^H \bm{\Theta}_t \bm F\bm w =\bm {g}_{s,q}^H\text{diag}(\bm F\bm w)\bm {v}_t$}. It is known that $R_{b,k}^{(p)}$ in the constraint C1 with respect to $\bm {v}_r$ (i.e., $\bm{\Theta}_r$) is non-convex. To deal with this issue, we approximate $R_{b,k}^{(p)}$ into a convex function by exploiting its lower-bound, which  is approximated as 
	\begin{align}
		\label{multi_SCA}
		R_{b,k}^{(p)}& \ge \frac{1}{2}\log_2 (1+ \frac{d_{p,k} d_{p,k}^H +2 \mathcal{R}\{d_{p,k}^H \bm {r}_{p,k}^H \bm {v}_r\}+|\bm {r}_{p,k}^H \bm {v}_r^{(x) }|^2 }{\sigma_{p,k}^2}	+\frac{2 \mathcal{R} \{  (\bm {v}_r^{(x)})^H  \bm {r}_{p,k}\bm {r}_{p,k}^H (\bm {v}_r-\bm {v}_r^{(x)})\}}{\sigma_{p,k}^2} ) \\
		&+ \frac{1}{2}\log_2 (1+ \frac{d_{p,k} d_{p,k}^H -2 \mathcal{R}\{d_{p,k}^H \bm {r}_{p,k}^H \bm {v}_r\}+|\bm {r}_{p,k}^H \bm {v}_r^{(x) }|^2 }{\sigma_{p,k}^2}	+\frac{2 \mathcal{R} \{ (\bm {v}_r^{(x) })^H\bm {r}_{p,k}\bm {r}_{p,k}^H (\bm {v}_r-\bm {v}_r^{(x)})\}}{\sigma_{p,k}^2} )  \triangleq \overline{R}_{b,k}^{(p)},\nonumber
	\end{align} 
	where $d_{p,k}=\bm{h}_{p,k}^H\bm{w}$, $\bm {r}_{p,k}^H=\bm {g}_{p,k}^H\text{diag}(\bm F\bm w)$, and {$\bm {v}_r^{(x)}$} is a feasible point of $\bm{\bm {v}_r}$ in the {$x$-th} iteration of the SCA method. It can be found that $\overline{R}_{b,k}^{(p)}$ is a concave function of $\bm{\bm {v}_r}$.

	Similarly, the lower-bound expressions of {$R_{b,q}^{(s)}$ and $\gamma_{b,q}$} are respectively given by 
	\begin{align}
		\label{multi_SCA2}
		{R}_{b,q}^{(s)}& \ge \frac{1}{2}\log_2(1+ \frac{d_{s,q} d_{s,q}^H +2 \mathcal{R}\{d_{s,q}^H \bm {r}_{s,q}^H \bm {v}_t\}+|\bm {r}_{s,q}^H \bm {v}_t^{(x) }|^2 }{\sigma_{s,q}^2}+\frac{2 \mathcal{R} \{ (\bm {v}_t^{(x) })^H\bm {r}_{s,q}\bm {r}_{s,q}^H (\bm {v}_t-\bm {v}_t^{(x)})\}}{\sigma_{s,q}^2}) \\
		&+ \frac{1}{2}\log_2(1+ \frac{d_{s,q} d_{s,q}^H -2 \mathcal{R}\{d_{s,q}^H \bm {r}_{s,q}^H \bm {v}_t\}+|\bm {r}_{s,q}^H \bm {v}_t^{(x) }|^2 }{\sigma_{s,q}^2}  + \frac{2 \mathcal{R} \{ (\bm {v}_t^{(x) })^H\bm {r}_{s,q}\bm {r}_{s,q}^H (\bm {v}_t-\bm {v}_t^{(x)})\}}{\sigma_{s,q}^2})  \triangleq \overline{R}_{b,q}^{(s)}, \nonumber \\
	\gamma_{b,q}	&  \ge \frac{L}{\chi_q +\sigma_{s,q}^2} |\bm {r}_{s,q}^H \bm {v}_t^{(x)} |^2  + 2 \frac{L}{\chi_q+\sigma_{s,q}^2} \mathcal{R} \{({\bm {v}_t^{(x)}}) ^H \bm {r}_{s,q}\bm {r}_{s,q}^H (\bm {v}_t-\bm {v}_t^{(x)}) \}  \triangleq \overline{\gamma}_{b,q},
	\end{align} 
	where $d_{s,q}=\bm{h}_{s,q}^H\bm{w}$,  $\bm {r}_{s,q}^H=\bm {g}_{s,q}^H\text{diag}(\bm F\bm w)$, $\chi_q=\mu \text{Tr}(\bm H_q W)$, and {$\bm {v}_t^{(x)}$} is a feasible point of $\bm{\bm {v}_t}$ in the {$x$}-th iteration of the SCA method.

	Moreover, in order to improve the convergence performance, a residual variable vector, denoted by {$ \bm \alpha=[\alpha_1,\ldots,\alpha_{K+2Q}]^T$}, is introduced. Then, \textbf{P3.3} can be  rewritten as 
	 \begin{equation}\tag{$\textbf{P3.4}$} 
	 	\begin{aligned}
	 		\min_{\bm{v}_r,\bm{v}_t,\bm \alpha} ~~& -\sum_{k=1}^{K+2Q} \alpha_k +\frac{1}{2\rho}\sum_{\iota \in\{r,t\}}\|\tilde{ \bm{v}}_{\iota}-\bm{v}_{\iota}+\rho\bm{\lambda_{\iota}}   \| ^2 \\ 
	 		\text{s.t.}~~& {\overline{\overline{\text{C1}}}}: \overline{R}_{b,k}^{(p)}  \ge R_{b, \text{min}}+\alpha_k, k\in \mathcal{K},\\
	 		& \overline{\overline {\text{C2} }}:\overline{R}_{b,q}^{(s)} \ge R_{b,\text{min}}+\alpha_{K+q}, q\in \mathcal{Q}, \\
	 		&\overline {\overline {\text{C3} }}:\overline{\gamma}_{b,q} \ge \gamma_{b,\text{min}} + {\alpha_{K+Q+q}}, q\in \mathcal{Q}, \\
	 		& {{\text{C12} }:|v_m^r|^2 +|v_m^t|^2 = 1,m\in \mathcal{M}.}
	 	\end{aligned}
	 \end{equation}
	As \textbf{P3.4} is convex,  the CVX tool can be utilized to solve it.

	 \subsubsection[$ \{\tilde{ {v}}_r, \tilde{{v}}_t \}$]{Optimization of $ \{\tilde{ \bm{v}}_r, \tilde{ \bm{v}}_t \}$}
	 \label{dual_optimizae}
	 With the fixed $\bm {v}_r$, $\bm {v}_t$, $\rho$, and $\bm{\lambda}_{\iota},\iota \in {\{r,t\}}$, the sub-problem for optimizing $ \{\tilde{ \bm{v}}_r, \tilde{ \bm{v}}_t \}$ is formulated as 
	 \begin{equation}\tag{$\textbf{P3.5}$} 
	 	\begin{aligned}
	 		\min_{\tilde{ \bm{v}}_r,\tilde{\bm{v}}_t} ~~&\sum_{\iota \in\{r,t\}} \|\tilde{ \bm{v}}_{\iota}-\bm{v}_{\iota}+\rho\bm{\lambda_{\iota}}   \| ^2 \\ 
	 		\text{s.t.}~~& \text{C10:~} \tilde {\beta}_m^r+\tilde {\beta}_m^t=1,~m\in \mathcal{M},\\
	 		& \text{C11:~} \cos(\tilde{\theta}_m^r-\tilde{\theta}_m^t)=0,~m\in \mathcal{M}.
	 	\end{aligned}
	 \end{equation}
	It is  found that \textbf{P3.5} is non-convex due to the coupled amplitude and phase shift constraints. However, this limitation can be overcome by optimizing the amplitude coefficients and phase shifts in an alternating manner. Denote $\tilde { \bm {\beta}}_{\iota} =[\sqrt{\tilde{\beta}_{1}^{\iota}},\ldots,\sqrt{\tilde{\beta}_{M}^{\iota}}]^T$ and $ \tilde { \bm {\varphi}}_{\iota} =[ e^{j\tilde{\theta}_{1}^{\iota}},\ldots, e^{j\tilde{\theta}_{M}^{\iota}}]^T$, where $\tilde { \bm {\beta}}_{\iota}$ is the amplitude vector, $ \tilde { \bm {\varphi}}_{\iota}$ is the phase shift vector, and $\iota \in \{r,t\}$. Then, we have $\tilde{\bm{v}}_{\iota}=\text{diag}(\tilde { \bm {\beta}}_{\iota})\tilde { \bm {\varphi}}_{\iota}=\text{diag}(\tilde { \bm {\varphi}}_{\iota})\tilde { \bm {\beta}}_{\iota}, \iota \in \{r,t\} $.
	Under the condition that the constraint C10 holds, we first reformulate the objective function as 
	\begin{align}
	\label{Objective}
		&\sum_{\iota \in\{r,t\}} \|\tilde{ \bm{v}}_{\iota}-\bm{v}_{\iota}+\rho\bm{\lambda}_{\iota}   \| ^2  = \sum_{\iota \in\{r,t\}} \|\tilde{ \bm{v}}_{\iota}+\bm{\varphi}_{\iota}   \| ^2 
		= \sum_{\iota \in\{r,t\}}(\tilde{ \bm{v}}_{\iota}^H \tilde{ \bm{v}}_{\iota} +\bm{\varphi}_{\iota}^H \bm{\varphi}_{\iota} +2\mathcal{R} \{ \bm{\varphi}_{\iota}^H \tilde{ \bm{v}}_{\iota} \} ) \nonumber \\
		&~~~=N +\sum_{\iota \in\{r,t\}} \bm{\varphi}_{\iota}^H \bm{\varphi}_{\iota} +\sum_{\iota \in\{r,t\}}  2\mathcal{R} \{ \bm{\varphi}_{\iota}^H \text{diag}(\tilde { \bm {\beta}}_{\iota})\tilde { \bm {\varphi}}_{\iota} \},
	\end{align}
where $ \bm{\varphi}_{\iota}=\rho \bm{\lambda}_{\iota} - \bm v_{\iota} $. It is observed that only the third term of \eqref{Objective} is with respect to $\tilde{\bm {\beta}}_{\iota}$ and  $ \tilde { \bm {\varphi}}_{\iota}$.  Thus, \textbf{P3.5} can be reduced as {
	 \begin{equation}\tag{$\textbf{P3.6}$} 
	 	\begin{aligned}
	 		\min_{ \tilde { \bm {\beta}}_{r},\tilde { \bm {\beta}}_{t}, \tilde {\bm {\varphi}}_{r},\tilde { \bm {\varphi}}_{t}} ~~& \sum_{\iota \in\{r,t\}}  \mathcal{R} \{ \bm{\varphi}_{\iota}^H \text{diag}(\tilde { \bm {\beta}}_{\iota})\tilde { \bm {\varphi}}_{\iota} \}   \\ 
	 		\text{s.t.}~~& \text{C10},~\text{C11}.
	 	\end{aligned}
	 \end{equation} }
	 According to \cite{coupledsolution}, the solution to  \textbf{P3.6} can be derived from Proposition \ref{coupling_phase_closedform} and Proposition \ref{coupling_amplitude_closedform}.

	 \begin{proposition}
	 	\label{coupling_phase_closedform}
	 	 Let $ \bm {\psi}_{\iota} =\text{diag}(\tilde { \bm {\beta}}_{\iota}^H)\bm{\varphi}_{\iota}=[\psi_{1}^{\iota},\ldots,\psi_{M}^{\iota}]^H, \iota \in \{r,t\}$. Given the amplitude vectors,  the optimal phase shifts for $m \in \mathcal{M}$  are given by
	 	\begin{align} 
	 		{(\tilde{\theta}_{m}^{r})}^*&=\pi -\text{arg}(\psi_{m}^{r}+j\psi_{m}^{t}),{(\tilde{\theta}_{m}^{t})}^*= \tilde{\theta}_{m}^{r}+\frac{1}{2}\pi, \\
	 		{(\tilde{\theta}_{m}^{r})}^*&=\pi -\text{arg}(\psi_{m}^{r}-j\psi_{m}^{t}),{(\tilde{\theta}_{m}^{t})}^*= \tilde{\theta}_{m}^{r}-\frac{1}{2}\pi.	
	 	\end{align} 
	 	As there are two solutions for the optimal phase shifts, one of which leading to a smaller value of the objective function is finally chosen.
	 	\end{proposition}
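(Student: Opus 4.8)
The plan is to exploit the elementwise separability of \textbf{P3.6}. The objective $\sum_{\iota\in\{r,t\}}\mathcal{R}\{\bm{\varphi}_{\iota}^H\text{diag}(\tilde{\bm{\beta}}_{\iota})\tilde{\bm{\varphi}}_{\iota}\}$ is a sum over the $M$ elements, the constraint C11 is imposed per element, and the amplitude constraint C10 involves only $\tilde{\bm{\beta}}_r,\tilde{\bm{\beta}}_t$; hence, once the amplitude vectors are fixed (consistently with C10), the phase optimization decomposes into $M$ independent single-element problems and it suffices to treat a generic index $m$. Using the definition $\bm{\psi}_{\iota}=\text{diag}(\tilde{\bm{\beta}}_{\iota}^H)\bm{\varphi}_{\iota}$, the contribution of element $m$ to the objective can be written as $\mathcal{R}\{\psi_{m}^{r}e^{j\tilde{\theta}_{m}^{r}}\}+\mathcal{R}\{\psi_{m}^{t}e^{j\tilde{\theta}_{m}^{t}}\}$, so the per-element subproblem is to minimize this quantity over $\tilde{\theta}_{m}^{r},\tilde{\theta}_{m}^{t}\in[0,2\pi)$ subject to $\cos(\tilde{\theta}_{m}^{r}-\tilde{\theta}_{m}^{t})=0$.

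The next step is to resolve the phase-correlation constraint. Since $\cos(\tilde{\theta}_{m}^{r}-\tilde{\theta}_{m}^{t})=0$ admits exactly the two branches $\tilde{\theta}_{m}^{t}=\tilde{\theta}_{m}^{r}+\frac{\pi}{2}$ and $\tilde{\theta}_{m}^{t}=\tilde{\theta}_{m}^{r}-\frac{\pi}{2}$ (modulo $2\pi$), I would handle each branch in turn. Substituting $e^{j\tilde{\theta}_{m}^{t}}=\pm j\,e^{j\tilde{\theta}_{m}^{r}}$ merges the two real-part terms into a single expression $\mathcal{R}\{(\psi_{m}^{r}\pm j\psi_{m}^{t})e^{j\tilde{\theta}_{m}^{r}}\}$, which is of the elementary form $\mathcal{R}\{z\,e^{j\theta}\}=|z|\cos(\theta+\text{arg}(z))$ for the fixed complex number $z=\psi_{m}^{r}\pm j\psi_{m}^{t}$. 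Its minimum over $\theta\in[0,2\pi)$ equals $-|z|$ and is attained at $\theta=\pi-\text{arg}(z)$, which yields precisely the two candidate phase pairs stated in the proposition, with optimal objective values $-|\psi_{m}^{r}+j\psi_{m}^{t}|$ and $-|\psi_{m}^{r}-j\psi_{m}^{t}|$, respectively.

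Finally, because the feasible set of C11 is the union of the two branches, the genuine per-element minimizer is the candidate that attains the smaller of these two values --- equivalently, the branch maximizing $|\psi_{m}^{r}\pm j\psi_{m}^{t}|$ --- and reassembling the $M$ decoupled solutions gives the optimal $\tilde{\bm{\varphi}}_{r},\tilde{\bm{\varphi}}_{t}$ for \textbf{P3.6} with the amplitudes fixed, as claimed in Proposition~\ref{coupling_phase_closedform}. I do not anticipate a serious obstacle: the derivation is essentially a one-variable trigonometric minimization, and the only points demanding care are the conjugation/real-part bookkeeping needed to combine the two terms into $\psi_{m}^{r}\pm j\psi_{m}^{t}$ with the correct signs, together with the observation that the two-branch structure of C11 prevents committing to a single branch in advance --- hence the explicit comparison in the statement.
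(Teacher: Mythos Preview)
Your proposal is correct. The paper itself does not supply a proof of Proposition~\ref{coupling_phase_closedform}: it states explicitly that ``the proof of Proposition~\ref{coupling_phase_closedform} and Proposition~\ref{coupling_amplitude_closedform} can be found in \cite{coupledsolution} and thus omitted here for simplicity.'' Your argument---elementwise separation of \textbf{P3.6}, splitting the constraint C11 into the two branches $\tilde{\theta}_m^t=\tilde{\theta}_m^r\pm\frac{\pi}{2}$, collapsing the per-element objective to $\mathcal{R}\{(\psi_m^r\pm j\psi_m^t)e^{j\tilde{\theta}_m^r}\}$, and invoking the elementary minimizer $\tilde{\theta}_m^r=\pi-\text{arg}(\psi_m^r\pm j\psi_m^t)$---is precisely the standard derivation one expects from the cited reference, and it reproduces the two candidate solutions together with the branch-comparison rule in the statement.
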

	 	% \begin{proof}
	 	% 	Please refer to \cite{coupledsolution}.
	 	% \end{proof}

 	\begin{proposition}
 		\label{coupling_amplitude_closedform}
 	  	Let $\overline{\bm {\psi}} _{\iota} =\text{diag}(\tilde { \bm {\varphi}}_{\iota}^H)\bm{\varphi}_{\iota}=[\overline{\psi}_{1}^{\iota},\ldots,\overline{\psi}_{M}^{\iota}]^H, \iota \in \{r,t\}$, $a_m^r=\mathcal{R}\{\overline{\psi}_{m}^{r}\}$, $a_m^t=\mathcal{R}\{\overline{\psi}_{m}^{t}\}$, and $\xi_m=\text{sgn}(a_m^t) \times \text{arccos}(\frac{a_m^r}{\sqrt{{(a_m^r)}^2+{(a_m^t)}^2}})$. Given the phase shifts,  the optimal amplitude coefficients for $m \in \mathcal{M}$ are given by
 	 \begin{align} 
 	 	\label{opt_amplitude}
 			\sqrt{({\tilde{\beta}_{m}^{r}})^*}=\text{sin}(\omega_m),
 			\sqrt{({\tilde{\beta}_{m}^{t}})^*}=\text{cos}(\omega_m),
 	 \end{align}  	
 	  where
 	  \begin{equation}
 	  	\label{solution_amplitude}
 	  	\omega_m=\left\{
 	  	\begin{aligned}
 	  		&-\frac{1}{2}\pi-\xi_m,  \quad  &\text{if}~~ \xi_m \in [-\pi,-\frac{1}{2}\pi),\\
 	  		&~0, \quad  &\text{if}~~ \xi_m \in [-\frac{1}{2}\pi,\frac{1}{4}\pi),\\
 	  		&~\frac{1}{2}\pi, \quad ~~~~~~&\text{else}. \quad~~~~~~~~~~~~~~~
 	  	\end{aligned}
 	  	\right
 	  	.
 	  \end{equation}	 	  	
 	\end{proposition}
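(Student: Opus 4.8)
The plan is to reduce the objective of \textbf{P3.6} to a sum that separates across the STAR-RIS elements, and then to solve the resulting scalar problem for each element in closed form. Expanding the quadratic form $\bm{\varphi}_{\iota}^H\text{diag}(\tilde{\bm{\beta}}_{\iota})\tilde{\bm{\varphi}}_{\iota}$ entrywise with $\tilde{v}_m^{\iota}=\sqrt{\tilde{\beta}_m^{\iota}}e^{j\tilde{\theta}_m^{\iota}}$ and using that the real part is invariant under conjugation, one gets $\mathcal{R}\{\bm{\varphi}_{\iota}^H\text{diag}(\tilde{\bm{\beta}}_{\iota})\tilde{\bm{\varphi}}_{\iota}\}=\sum_{m\in\mathcal{M}}\sqrt{\tilde{\beta}_m^{\iota}}\,\mathcal{R}\{\overline{\psi}_m^{\iota}\}=\sum_{m\in\mathcal{M}}a_m^{\iota}\sqrt{\tilde{\beta}_m^{\iota}}$, so the objective of \textbf{P3.6} equals $\sum_{m\in\mathcal{M}}\big(a_m^{r}\sqrt{\tilde{\beta}_m^{r}}+a_m^{t}\sqrt{\tilde{\beta}_m^{t}}\big)$. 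Since C11 does not involve the amplitudes and C10 couples $\tilde{\beta}_m^{r}$ with $\tilde{\beta}_m^{t}$ only for a common index $m$, with the phase shifts held fixed the problem decouples into $M$ independent scalar problems, one per element.

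For a fixed $m$ I would parametrize the feasible set cut out by C10 together with the implicit $\tilde{\beta}_m^{r},\tilde{\beta}_m^{t}\ge 0$ (these being squared amplitudes) by $\sqrt{\tilde{\beta}_m^{r}}=\sin\omega_m$ and $\sqrt{\tilde{\beta}_m^{t}}=\cos\omega_m$ with $\omega_m\in[0,\frac{1}{2}\pi]$, which is a bijection onto that set. The $m$-th problem then becomes $\min_{\omega_m\in[0,\pi/2]}a_m^{r}\sin\omega_m+a_m^{t}\cos\omega_m$, and by the harmonic-addition identity the objective equals $\sqrt{(a_m^{r})^2+(a_m^{t})^2}\,\sin(\omega_m+\xi_m)$ with $\xi_m=\text{sgn}(a_m^{t})\arccos\big(a_m^{r}/\sqrt{(a_m^{r})^2+(a_m^{t})^2}\big)$, exactly the offset defined in the statement (degenerate cases with $(a_m^{r})^2+(a_m^{t})^2=0$, or with $a_m^{t}=0$, are handled by convention). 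As $\sqrt{(a_m^{r})^2+(a_m^{t})^2}\ge 0$, minimizing reduces to minimizing $\sin(\cdot)$ over the length-$\frac{\pi}{2}$ arc $[\xi_m,\xi_m+\frac{1}{2}\pi]$, where $\xi_m\in[-\pi,\pi]$.

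I would finish with a case analysis on $\xi_m$. If $\xi_m\in[-\pi,-\frac{1}{2}\pi)$, the arc contains $-\frac{1}{2}\pi$, so $\sin$ attains its global minimum $-1$ there, i.e., at $\omega_m=-\frac{1}{2}\pi-\xi_m\in(0,\frac{1}{2}\pi]$. Otherwise $-\frac{1}{2}\pi\le\xi_m\le\xi_m+\frac{1}{2}\pi$, so on the arc $\sin$ is first nondecreasing and then nonincreasing, hence its minimum over the arc is attained at one of the endpoints; comparing the endpoint values $\sin\xi_m$ (at $\omega_m=0$) and $\sin(\xi_m+\frac{1}{2}\pi)=\cos\xi_m$ (at $\omega_m=\frac{1}{2}\pi$), and using $\sin\le\cos$ on $[-\frac{1}{2}\pi,\frac{1}{4}\pi)$ and $\sin\ge\cos$ on $[\frac{1}{4}\pi,\pi]$, the minimizer is $\omega_m=0$ for $\xi_m\in[-\frac{1}{2}\pi,\frac{1}{4}\pi)$ and $\omega_m=\frac{1}{2}\pi$ otherwise. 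Substituting back through the parametrization then yields \eqref{opt_amplitude}--\eqref{solution_amplitude}. The main obstacle is this last step: organizing the three regimes consistently and verifying, in particular at the breakpoint $\xi_m=\frac{1}{4}\pi$ (where the two endpoint values coincide) and at $\xi_m=-\frac{1}{2}\pi$, that the piecewise rule selects a genuine global minimizer of $\sin$ on the arc rather than merely a local one, which requires the monotonicity argument above instead of a naive stationarity test.
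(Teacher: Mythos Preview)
Your argument is correct. The paper does not supply its own proof of this proposition; it simply refers the reader to \cite{coupledsolution} (Wang \emph{et al.}, ``Coupled phase-shift STAR-RISs: A general optimization framework''), so there is no in-paper derivation to compare against step by step. That said, your approach---separating the objective of \textbf{P3.6} across elements via $\mathcal{R}\{\bm{\varphi}_\iota^H\text{diag}(\tilde{\bm{\beta}}_\iota)\tilde{\bm{\varphi}}_\iota\}=\sum_m a_m^\iota\sqrt{\tilde{\beta}_m^\iota}$, parametrizing the C10 circle by $(\sin\omega_m,\cos\omega_m)$ with $\omega_m\in[0,\pi/2]$, invoking harmonic addition to obtain $\sqrt{(a_m^r)^2+(a_m^t)^2}\,\sin(\omega_m+\xi_m)$, and then doing the three-way case analysis on $\xi_m$---is exactly the standard route and is what the cited reference uses. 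Your handling of the endpoint comparison via $\sin\xi_m\lessgtr\cos\xi_m$ on $[-\pi/2,\pi/4)$ versus $[\pi/4,\pi]$ is the cleanest way to close the argument, and your remark that the interior stationary point (when it exists in the arc) is a \emph{maximum} of $\sin$, not a minimum, is precisely the point that justifies restricting attention to the endpoints.
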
	
 	% \begin{proof}
 	% 	Please refer to \cite{coupledsolution}.
 	% \end{proof}

 	The proof of Proposition \ref{coupling_phase_closedform} and Proposition \ref{coupling_amplitude_closedform} can be found in \cite{coupledsolution} and thus omitted here for simplicity. 
 	By alternating updating the optimal amplitude coefficients and phase shifts in Proposition \ref{coupling_phase_closedform} and  Proposition \ref{coupling_amplitude_closedform}, we can finally obtain a near-optimal solution of \textbf{P3.6}. 
 	
   % We can optimize the  dual variables alternately by exploiting Proposition \ref{coupling_phase_closedform} and  Proposition \ref{coupling_amplitude_closedform}.  The optimal amplitude vectors and phase shift vector can be obtained in each step, and the overall optimization process for reflection and transmission coefficients have completed. 
   \subsection[$\lambda_{\iota}$]{Updating ${\bm {\lambda}}_{\iota}$ and $\rho$}
  % \subsection{\texorpdfstring{Updating ${\bm {\lambda}}_{\iota}$ and $\rho$}{$\lambda_{\iota}$}}
   With the obtained $\{{\bm{v}_r,\bm{v}_t,\tilde{ \bm{v}}_r,\tilde{\bm{v}}_t}\}$, we proceed to update $\bm{\lambda}_{\iota}$ and $\rho$. Specifically, $\bm{\lambda}_{\iota}$ is updated by 
   \begin{align}
   \label{VUpdate}
\bm {\lambda_{\iota}}=\bm {\lambda_{\iota}}+\frac{1}{\rho}(\tilde{ \bm{v}}_{\iota}- \bm{v}_{\iota}) , {\iota} \in \{r,t\}. 
   \end{align}
 From \eqref{VUpdate}, it is found that the update of $\bm{\lambda_{\iota}}$  is controlled by the  violation with respect to the constraint C9 and the plenty factor.
Then, $\rho$ is updated by 
\begin{align}
\label{RhoUpdate}
\rho = \overline{c} \rho, 
\end{align}
where $0< \overline{c} < 1$ denotes the step size. From \eqref{RhoUpdate}, it is known that the plenty factor is  gradually  decreased with the number of iterations as to balance the convergence speed and solving accuracy. 
   
   \subsection{Algorithm Summary and Analysis}
The  proposed BCD-based algorithm for solving \textbf{P1} is summarized in Algorithm \ref{AlgorithmA}.  It is found that the transmit power is non-increasing after each iteration, which can be proved in a similar way as that in \cite{Convergence}.
   Thus, the value of the objective function in \textbf{P1} will converge  to a stationary point. Then, we analyze the computational complexity of  Algorithm \ref{AlgorithmA}. According to  \cite{complexity}, the complexities of solving \textbf{P2.1}  and \textbf{P3.4} are 
    $\mathcal{O}(\sqrt{K+2Q+N} \log(\frac{1}{\epsilon})\cdot(n_1(K+2Q+N^3)+n_1^2(K+2Q+N^2)+n_1^3  ) )$ and $\mathcal{O}(\sqrt{2K+4Q+2M} \log(\frac{1}{\epsilon})\cdot(n_2(2K+4Q)+n_2^2(2K+4Q)+n_2(4M)+n_2^3  ) )$, respectively, where $n_1=N^2$ and $n_2=2M+K+2Q$. Therefore, the  overall computational complexity of Algorithm \ref{AlgorithmA} is $\mathcal{O}(I_1 I_2\sqrt{K+2Q+N} \log(\frac{1}{\epsilon})\cdot(n_1(K+2Q+N^3)+n_1^2(K+2Q+N^2)+n_1^3  ) 
    + I_1 I_2 I_3\sqrt{2K+4Q+2M} \log(\frac{1}{\epsilon})\cdot(n_2(2K+4Q)+n_2^2(2K+4Q)+n_2(4M)+n_2^3  ) ) $, where $I_1$ is the outer iteration number for updating  $\bm \lambda_{\iota}$ and  $\rho$, $I_2$ is the inner iteration number for the proposed BCD algorithm, $I_3$ is the iteration number for solving \textbf{P3.4} by using the SCA method, respectively.
	 \begin{algorithm}
	 	\caption{ The BCD based Algorithm for Problem \textbf{P1}}
	 	\label{AlgorithmA}
	 	\begin{algorithmic}[1]  
	 		\STATE {Initialization: transmission and reflection beamforming vector $\bm{v}_{\iota}$, 
	 			  convergence tolerance $ \epsilon$,
	 			  outer iteration index $t_1=0$, 
	 			  inner iteration index $t_2=0$,
	 			  {SCA iteration index $x$}, the maximum violation value $\eta=10$, $\overline c=0.1$, $\bm \lambda_{\iota} =\bm 0 $, and $\rho=1$.} 
	 		\REPEAT %outer-iter
	 		\STATE {$t_1=t_1+1$.}
	 		\REPEAT    %inner-iter 
	 		\STATE {$t_2=t_2+1$ and {$x=0$}.}
	 		\STATE{Update $\bm{w}^{(t_1,t_2)}$ by solving \textbf{P2.1}.} \label{step6}
	 		\REPEAT
	 		\STATE {$x=x+1$.}
	 		\STATE{Obtain ${\bm {v}_r^{(x)}}$ and ${\bm {v}_t^{(x)}}$ by solving \textbf{P3.4}. }
	 		\UNTIL{the decrease of the objective function  value in \textbf{P3.4} is below $ \epsilon$.}  
	 		\STATE{Update $\bm{v}_r^{(t_1,t_2)}={\bm {v}_r^{(x)}}$ and $\bm{v}_t^{(t_1,t_2)}={\bm {v}_t^{(x)}}$}.
	 		\STATE{Update $\tilde{\bm \varphi}_{r}$ and $\tilde{\bm \varphi}_{t}$ by Proposition \ref{coupling_phase_closedform}.}
	 		\STATE{Update $\tilde{\bm \beta}_{r}$ and $\tilde{\bm \beta}_{t}$ by Proposition \ref{coupling_amplitude_closedform}.}    
	 		\UNTIL{the decrease of the objective function value in \textbf{P1} is below $\epsilon$.}
	 		\STATE{Update the constraint violation value $\delta$ by implementing $\delta=\text{max}\{\|\tilde{ \bm{v}}_r- \bm{v}_r \|_{\infty}, \|\tilde{ \bm{v}}_t- \bm{v}_t \|_{\infty} \}$.}
	 		\IF{$\delta \le \eta$} 
	 		\STATE{$\bm {\lambda_{\iota}}=\bm {\lambda_{\iota}}+\frac{1}{\rho}(\tilde{ \bm{v}}_{\iota}- \bm{v}_{\iota}) $, ${\iota} \in \{r,t\}$. }
	 		\ELSE{}
	 		\STATE{$\rho=\overline c\rho$.}
	 		\ENDIF
	 		\STATE{$ \eta=0.9\delta $.}
	 		\UNTIL{$\delta$ is below $\epsilon$.}
	 	\end{algorithmic}
	 \end{algorithm}

	\section{Transmit Power Minimization For Unicasting Signal Model}
	\label{section4}
	In this section, we continue to minimize  the transmit power at the BS in the unicasting signal model. To achieve this goal, we also focus on designing the active beamforming at the BS and the reflection and transmission coefficients at the STAR-RIS.  The optimization problem in the  unicasting signal model is formulated as 
	 \begin{equation}\tag{$\textbf{P4}$} 
	 	\begin{aligned}
	 		\min_{\bm{w}_k,\bm{v}_r,\bm{v}_t} ~~&  \sum_{k=1}^{K} \|\bm{w}_k \|^2 \\ 
	 		\text{s.t.}~~ & \text{C13:~} R_{u,k}^{(p)} \ge {R}_{u,\text{min}},~ k\in\mathcal{K},  \\
	 		& \text{C14:~} R_{q,k}^{(s)} \ge {R}_{u,\text{min}},~ k\in\mathcal{K},~ q\in\mathcal{Q},  \\
	 		& \text{C15:~} \gamma_{u,q} \ge  {\gamma}_{u,\text{min}}, ~ q\in\mathcal{Q}, \\
	 		& \text{C4},~\text{C5},~\text{C6},~\text{C7}.
	 	\end{aligned}
	 \end{equation}
	 In \textbf{P4}, ${R}_{u,\text{min}}$ and ${\gamma}_{u,\text{min}}$ denote the minimum achievable rate and SNR requirements for decoding $s_k(l)$ and $c$ for the unicasting signal model, respectively. It is worth noting that compared to \textbf{P1}, it is more challenging to solve \textbf{P4} due to the following aspects. Firstly,
	 unlike the  broadcasting signal model, the constraint C14 for $k\in \mathcal{K}$ needs to be satisfied to ensure that all the primary signals are successfully decoded at the {$q$-th} SU. Secondly, the mutual interference between the independent primary signals makes the constraints C13 and C14 more challenging. To deal with the above difficulties, we extend the proposed BCD-based algorithm in Section \ref{Section3} to solve \textbf{P4}, based on which $\{\bm{w} \}$ and $\{\bm {v}_r, \bm{v}_t \}$ are also optimized in an alternating manner. Specifically, for the sub-problem of optimizing $\{\bm{w} \}$, we exploit the diagonal matrix structure for independent primary signal transmissions and transform the original sub-problem into a convex SDP.  For the optimization of $\{\bm {v}_r, \bm{v}_t \}$, the PDD framework with the SCA technique is also applied.
	 
	\subsection{Active Beamforming Optimization}
	\label{subsectionA}
	In this subsection, we find the optimal design of the active beamforming vectors at the BS with the fixed  coefficients of the STAR-RIS by solving the following problem
	\begin{equation}\tag{$\textbf{P5}$} 
		\begin{aligned}
			\min_{\bm{w}_k} ~~& \sum_{k=1}^{K} \|\bm{w}_k \|^2 \\ 
			\text{s.t.}~~ & \text{C13},~ \text{C14},~ \text{C15}.
		\end{aligned}
	\end{equation}
	It is worth noting that the joint optimization of multiple active beamforming matrices at the BS is challenging and the methods proposed in \cite{RIS-inf-transfer1,RIS-inf-transfer2,RIS-inf-transfer3,QianqianZhang,RIS-SR2,RIS-SR3,RIS-SR4,RIS-SR6} and \cite{coupledsolution} cannot be utilized here directly. To solve \textbf{P5} efficiently, we also exploit the SDR method to solve \textbf{P5}. Let $\bm {W}_k=\bm {w}_k\bm {w}_k^H \in \mathbb{C}^{N\times N}$, which is a rank-one semi-definite matrix, i.e.,
	 $\bm {W}_k \succeq 0$  and $\text{Rank}(\bm {W}_k)=1,k\in \mathcal{K}$. Then, we construct a block diagonal matrix $\bm {W}$, which is defined as
	 \begin{align}
     \bm{W}=\text{BlDiag}(\{\bm{W}_k\}_{k=1}^K)	
           =\begin{bmatrix}
			&\bm {W}_1  &\bm{0}_{N \times N}  &\cdots  &\bm{0}_{N \times N} \\
			&\bm{0}_{N \times N}  &\bm {W}_2  &\cdots &\bm{0}_{N \times N} \\
			&\vdots  &\vdots  &\ddots  &\vdots \\
			&\bm{0}_{N \times N}  &\bm{0}_{N \times N}  &\cdots  &\bm {W}_K \\
		\end{bmatrix},
	 \end{align}
	 where $\text{BlDiag}( \{\bm{X}_k\}_{k=1}^K  )$ denotes the block diagonal operation of $\{\bm{X}_k\}_{k=1}^K$.
	 It is straightforward to  find that $\bm {W}$ is also a semi-definite matrix, the rank of which is equal to $K$. Similarly, to make the constraints concise, the  following variables are first introduced: $\bm {d}_{p,k}=[\bm {0}_N^H,\ldots,\bm{h}_{p,k}^H,\ldots,\bm {0}_N^H]^H$, $ \overline{\bm {d}}_{q,k}=[\bm {0}_N^H,\ldots,\bm{h}_{s,q}^H,\ldots,\bm {0}_N^H]^H$, $\bm {r}_{p,k}=[\bm {0}_N^H,\ldots,\bm {g}_{p,k}^H \bm{\Theta}_r \bm F,\ldots,\bm {0}_N^H]^H$, and $ \overline{\bm {r}}_{q,k}=[\bm {0}_N^H,\ldots,\bm {g}_{s,q}^H \bm{\Theta}_t \bm F,\ldots,\bm {0}_N^H]^H$. Then, let $\bm {H}_{1,k}=(\bm {d}_{p,k}+\bm{r}_{p,k})(\bm {d}_{p,k}+\bm{r}_{p,k})^H$, $\bm {H}_{2,k}=(\bm {d}_{p,k}-\bm{r}_{p,k})(\bm {d}_{p,k}-\bm{r}_{p,k})^H$, $\overline{\bm {H}} _{1,q,k}=(\overline{\bm {d}}_{q,k}+\overline{\bm {r}}_{q,k})(\overline{\bm {d}}_{q,k}+\overline{\bm {r}}_{q,k})^H$, $\overline{\bm {H}} _{2,q,k}=(\overline{\bm {d}}_{q,k}-\overline{\bm {r}}_{q,k})(\overline{\bm {d}}_{q,k}-\overline{\bm {r}}_{q,k})^H$, {$\bm {D}_k=\text{BlDiag}(\{\bm h_{p,k} \bm h_{p,k}^H \}_{i=1}^K) - \bm h_{p,k} \bm h_{p,k}^H $,}
	$\overline{\bm {D}}_{q,k}=\sum_{i=1,i\neq k}^{K} \hat \mu_i(\overline{\bm {d}}_{q,i} \overline{\bm{d}}_{q,i}^H)$, 
	 {$\bm {R}_k=\text{BlDiag}(\{(\bm {g}_{p,k}^H \bm{\Theta}_r \bm F)^H (\bm {g}_{p,k}^H \bm{\Theta}_r \bm F)  \}_{i=1}^K) - \bm{r}_{p,k} \bm{r}_{p,k}^H$,}
	   $\overline{\bm {R}}_{q,k}=\sum_{i=1,i \neq k}^{K} (\overline{\bm {r}}_{q,i} \overline{\bm {r}}_{q,i}^H)$,  $\overline{\bm {R}}_q=\sum_{i=1}^{K} (\overline{\bm {r}}_{q,i} \overline{\bm {r}}_{q,i}^H)$, and $\overline{\bm {H}}_q=\sum_{i=1}^{K} \mu_i(\overline{\bm {d}}_{q,i} \overline{\bm {d}}_{q,i}^H)$. Note that $\text{BlDiag}( \{\bm{X}_k\}_{i=1}^K  )$ denotes a block diagonal matrix in which all the diagonal matrices are $\bm{X}_k$.
	   Based on the introduced variables above, the constraints C13-15 can be respectively reformulated as 
	\begin{align} 
		\label{broadcast_C1}
			&\frac{1}{2}\log_2(\text{Tr}(\bm {B}_{1,k} \bm W)+\sigma_{p,k}^2 )+\frac{1}{2}\log_2(\text{Tr}(\bm {B}_{2,k} \bm W)+\sigma_{p,k}^2 ) \nonumber \\
			&-\log_2(\text{Tr}(\bm {B}_{3,k} \bm W)+\sigma_{p,k}^2 ) \ge {R_{u,\text{min}}},~k \in \mathcal{K}, \\
			\label{broadcast_C2}
			&\frac{1}{2}\log_2(\text{Tr}(\overline{\bm {B}}_{1,q,k} \bm W)+\sigma_{s,q}^2 )+\frac{1}{2}\log_2(\text{Tr}(\overline{\bm {B}}_{2,q,k} \bm W)+\sigma_{s,q}^2 )\nonumber \\
			&-\log_2(\text{Tr}(\overline{\bm {B}}_{3,q,k} \bm W)+\sigma_{s,q}^2 ) \ge {R_{u,\text{min}}},~k\in \mathcal{K},~q\in\mathcal{Q}, \\
			&\overline{\text{C15}}: \text{Tr}((L\overline{\bm {R}}_q-{\gamma_{u,\text{min}}}\overline{\bm {H}}_q) \bm W)\ge \sigma_{s,q}^2 {\gamma_{u,\text{min}}},~q\in\mathcal{Q},		
	\end{align} 
	where $ \bm {B}_{1,k}=\bm {H}_{1,k}+\bm {B}_{3,k} $, $\bm {B}_{2,k}=\bm {H}_{2,k}+\bm {B}_{3,k}$, $\bm {B}_{3,k}=\bm {D}_k+\bm {R}_k$, $\overline{\bm {B}} _{1,q,k}=\overline{\bm {H}} _{1,q,k}+\overline{\bm {B}} _{3,q,k}$, $\overline{\bm {B}} _{2,q,k}=\overline{\bm {H}} _{2,q,k}+\overline{\bm {B}} _{3,q,k}$,  and $ \overline{\bm {B}} _{3,q,k}=\overline{\bm {D}} _{q,k}+\overline{\bm {R}} _{q,k} $. However, \textbf{P5} is still intractable after the above transformations due to  the difference of convex functions. To address this challenge, we can transform the constraints C13 and C14 into a convex form by exploiting their corresponding lower-bounds. Specifically, at any feasible point $\bm {W}^{(x)}$,  the lower-bounds  of  \eqref{broadcast_C1} and \eqref{broadcast_C2} by exploiting the first-order Taylor approximation are respectively given by
	\begin{align}
		\label{broadcast_C1_SCA}
		\overline {\text{C13} }:	
			&\frac{1}{2}\log_2(\text{Tr}(\bm {B}_{1,k} \bm W)+\sigma_{p,k}^2 )+\frac{1}{2}\log_2(\text{Tr}(\bm {B}_{2,k} \bm W)+\sigma_{p,k}^2 ) \nonumber \\
			&-\log_2(\text{Tr}(\bm {B}_{3,k} \bm W^{(x)})+\sigma_{p,k}^2 ) -\frac{\text{Tr}(\bm {B}_{3,k}(\bm W-\bm W^{(x)}))}{(\text{Tr}(\bm {B}_{3,k}\bm W^{(x)})+\sigma_{p,k}^2)\ln2} \ge {R_{u,\text{min}}}, \\
		\label{broadcast_C2_SCA}	
		\overline {\text{C14} }:	
			&\frac{1}{2}\log_2(\text{Tr}(\overline{\bm {B}}_{1,q,k} \bm W)+\sigma_{s,q}^2 )+\frac{1}{2}\log_2(\text{Tr}(\overline{\bm {B}}_{2,q,k} \bm W)+\sigma_{s,q}^2 ) \nonumber \\
			&-\log_2(\text{Tr}(\overline{\bm {B}}_{3,q,k} \bm W^{(x)})+\sigma_{s,q}^2 )  -\frac{\text{Tr}(\overline{\bm {B}}_{3,q,k}(\bm W-\bm W^{(x)}))}{(\text{Tr}(\overline{\bm {B}}_{3,q,k}\bm W^{(x)})+\sigma_{s,q}^2)\ln2} \ge {R_{u,\text{min}}}.	
	\end{align} 
	Then, by relaxing  the rank-$K$ constraint on $\bm W$ and using  $\overline {\text{C13} }$-$\overline {\text{C15} }$, \textbf{P5} is reformulated as  
	\begin{equation}\tag{$\textbf{P5.1}$} 
		\begin{aligned}
			\min_{W} ~~& \text{Tr}(\bm W) \\ 
			\text{s.t.}~~ &\overline{\text{C13} },~ \overline{\text{C14} },~ \overline{\text{C15}}.
		\end{aligned}
		\end{equation}
	It can be found that \textbf{P5.1} is convex, which  thus can be solved by implementing the  CVX tool.
  \begin{proposition} 
  \label{Rank-K}
  The optimal solution to \textbf{P5.1} in the $x$-th iteration, denoted by $\bm{W}^{(x+1)}$, is expressed as 
  \begin{align}
   \bm{W}^{(x+1)} = \text{BlDiag}\left(\{\bm{W}_k^*\}_{k=1}^K\right),	
 \end{align}
where  $\bm {W}_k^*$ is a rank-one matrix for any $k \in \mathcal{K}$, and  the rank of $\bm{W}^{(x+1)}$ is $K$.
 \end{proposition}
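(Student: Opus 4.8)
The plan is to prove Proposition~\ref{Rank-K} in two stages: first show that a minimizer of \textbf{P5.1} can always be taken block diagonal, and then show that each of its $K$ diagonal blocks is rank one, which together force $\text{Rank}(\bm{W}^{(x+1)})=K$. For the first stage, the key observation I would use is that \emph{every} coefficient matrix in \textbf{P5.1} is block diagonal: $\bm{H}_{1,k},\bm{H}_{2,k},\overline{\bm{H}}_{1,q,k},\overline{\bm{H}}_{2,q,k}$ are outer products of the stacked vectors $\bm{d}_{p,k},\bm{r}_{p,k},\overline{\bm{d}}_{q,k},\overline{\bm{r}}_{q,k}$, which are supported on a single diagonal block, while $\bm{D}_k,\bm{R}_k,\overline{\bm{D}}_{q,k},\overline{\bm{R}}_{q,k},\overline{\bm{R}}_q,\overline{\bm{H}}_q$ are block diagonal by construction; hence so are $\bm{B}_{1,k},\bm{B}_{2,k},\bm{B}_{3,k},\overline{\bm{B}}_{1,q,k},\overline{\bm{B}}_{2,q,k},\overline{\bm{B}}_{3,q,k}$. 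Consequently both $\text{Tr}(\bm{W})$ and all of $\overline{\text{C13}}$--$\overline{\text{C15}}$ depend on $\bm{W}$ only through its diagonal blocks $\bm{W}_{kk}$. So, given any feasible $\bm{W}\succeq 0$, replacing it by $\text{BlDiag}(\{\bm{W}_{kk}\}_{k=1}^{K})$ keeps it positive semidefinite (a principal submatrix of a PSD matrix is PSD), preserves feasibility, and leaves the objective unchanged; thus \textbf{P5.1} admits an optimal solution of the stated form, and I would fix one such $\{\bm{W}_k^{*}\}_{k=1}^{K}$.

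For the second stage I would argue by a coordinate sweep over $k$. With the other blocks frozen at $\bm{W}_j^{*}$, the matrix $\bm{W}_k^{*}$ must solve the residual SDP $\min_{\bm{W}_k\succeq 0}\text{Tr}(\bm{W}_k)$ under the induced constraints, since otherwise a cheaper $\bm{W}_k$ would lower the full objective without violating anything. The crucial simplification is that $(\bm{B}_{3,k})_{kk}=\bm{0}$ and $(\overline{\bm{B}}_{3,q,k})_{kk}=\bm{0}$ — the self-index $k$ is excluded from $\bm{D}_k,\bm{R}_k,\overline{\bm{D}}_{q,k},\overline{\bm{R}}_{q,k}$ — so that in $\overline{\text{C13}}_k$ and $\overline{\text{C14}}_{q,k}$ the SCA-linearized third logarithm is constant in $\bm{W}_k$ and the two remaining logarithms carry the \emph{same} fixed noise-plus-residual-interference offset. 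Hence $\overline{\text{C13}}_k$ and $\overline{\text{C14}}_{q,k}$ collapse to exactly the $\overline{\text{C1}}$- and $\overline{\text{C2}}$-type forms of \textbf{P2.1} with effective noise powers, $\overline{\text{C15}}_q$ is affine in $\bm{W}_k$ of the $\overline{\text{C3}}$ form, and each constraint indexed by $j\neq k$ depends on $\bm{W}_k$ only through an interference power $\text{Tr}(\bm{\Xi}_{j,k}\bm{W}_k)$ with $\bm{\Xi}_{j,k}=(\bm{B}_{3,j})_{kk}\succeq 0$ and is monotonically decreasing in it, hence equivalent to a cap $\text{Tr}(\bm{\Xi}_{j,k}\bm{W}_k)\le b_{j,k}$. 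Thus the residual problem is of precisely the type already treated for \textbf{P2.1}, augmented only by interference-power caps; since those caps contribute only positive-semidefinite terms to the dual slack matrix $\bm{Z}_k$ in the KKT system and therefore cannot lower its rank, the same rank-reduction argument used for \textbf{P2.1} (cf.~\cite{rank-one}) yields a rank-one optimal solution $\bm{W}_k^{\star}$ of the residual problem. Replacing block $k$ by $\bm{W}_k^{\star}$ preserves optimality and feasibility of \textbf{P5.1}, and sweeping $k=1,\dots,K$ produces an optimal solution all of whose diagonal blocks are rank one.

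To pin down the rank exactly, I would then note that no block can vanish: if $\bm{W}_k=\bm{0}$ then $\text{Tr}(\bm{B}_{1,k}\bm{W})=\text{Tr}(\bm{B}_{2,k}\bm{W})=\text{Tr}(\bm{B}_{3,k}\bm{W})$, so the left side of $\overline{\text{C13}}_k$ equals $\log_2 a-\log_2 b-(a-b)/(b\ln 2)$ with $a=\text{Tr}(\bm{B}_{3,k}\bm{W})+\sigma_{p,k}^{2}$ and $b=\text{Tr}(\bm{B}_{3,k}\bm{W}^{(x)})+\sigma_{p,k}^{2}$, which is $\le 0<R_{u,\text{min}}$ by concavity of $\log$ (the linearization lies above the curve), contradicting feasibility. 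Hence $\text{Rank}(\bm{W}_k^{\star})=1$ for every $k$, and $\bm{W}^{(x+1)}=\text{BlDiag}(\{\bm{W}_k^{\star}\}_{k=1}^{K})$ has rank $\sum_{k=1}^{K}\text{Rank}(\bm{W}_k^{\star})=K$.

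The step I expect to be the main obstacle is the rank-one property of the residual single-block problem, i.e.\ tightness of the SDR for power minimization under several coupled rate/SNR constraints. Because of the $c=\pm 1$ averaging, each rate constraint involves two distinct channel-combination directions ($\bm{h}_{p,k}\pm\bm{F}^{H}\bm{\Theta}_r^{H}\bm{g}_{p,k}$, and the SU analogue), and $\overline{\text{C15}}_q$ brings an indefinite matrix, so the textbook certificate ``$\bm{Z}_k=\bm{B}_k-(\text{one rank-one matrix})$ with $\bm{B}_k\succ 0$'' does not apply verbatim. The plan is to follow the refined analysis already invoked for \textbf{P2.1}: collect every positive-semidefinite dual contribution (from the interference-power caps, from $\overline{\bm{H}}_q$, and from the identity in the objective gradient) into a positive-definite matrix, and show the remaining subtracted part acts as a rank-one perturbation on the range of $\bm{W}_k$, so that complementary slackness $\bm{Z}_k\bm{W}_k=\bm{0}$ forces $\text{Rank}(\bm{W}_k)\le 1$; carrying this through the $c=\pm1$ structure is the technical crux.
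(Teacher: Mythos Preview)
Your approach differs substantially from the paper's. The paper does not decompose into per-block subproblems or invoke the \textbf{P2.1} result; instead it writes the KKT stationarity for the full problem (after introducing auxiliary slacks for the trace terms), observes that the aggregated dual matrix $\bm{\Pi}$ is block diagonal because every $\bm{B}_{u,k}$, $\overline{\bm{B}}_{u,q,k}$, $\overline{\bm{G}}_q$ is, hence the dual slack $\bm{\Omega}^*=\bm{I}-\bm{\Pi}$ is block diagonal, and then runs an eigenvalue trichotomy on each $\bm{\Omega}_k^*=\bm{I}-\bm{\Pi}_k$: if $\vartheta_{\max,k}<1$ then $\bm{W}_k^*=\bm{0}$ (ruled out), if $\vartheta_{\max,k}>1$ then $\bm{\Omega}_k^*\not\succeq\bm{0}$ (contradiction), and if $\vartheta_{\max,k}=1$ then $\bm{W}_k^*$ is rank one along the top eigenvector. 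Your first stage (primal block-diagonal projection) is actually cleaner than the paper's implicit passage from block-diagonal $\bm{\Omega}^*$ to block-diagonal $\bm{W}^*$, and is a genuine improvement.

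Your second stage, however, has a gap. The claim that each $j\neq k$ constraint is \emph{monotonically decreasing} in $\tau_k=\text{Tr}(\bm{\Xi}_{j,k}\bm{W}_k)$, and hence equivalent to a cap $\tau_k\le b_{j,k}$, is not justified: the left side of $\overline{\text{C13}}_j$ as a function of $\tau_k$ has the form $\tfrac{1}{2}\log_2(\tau_k+A_1)+\tfrac{1}{2}\log_2(\tau_k+A_2)-\tau_k/C+\text{const}$, which is concave but not monotone in general---its derivative can be positive or negative depending on where the optimum sits relative to the SCA point $\bm{W}^{(x)}$. When the derivative is positive at the optimum, the corresponding Lagrange contribution $\lambda_j f'(\tau_k^*)\bm{\Xi}_{j,k}$ is a \emph{subtracted} PSD matrix (of rank up to two) in the residual stationarity condition, not an added one, so your statement that the caps ``contribute only positive-semidefinite terms to the dual slack $\bm{Z}_k$'' fails. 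Together with the several rank-one directions already coming from the $c=\pm1$ structure in $\overline{\text{C13}}_k,\overline{\text{C14}}_{q,k}$ and the indefinite $\overline{\text{C15}}$ matrix, the residual dual slack is not of the ``positive-definite minus one rank-one'' form, and the \textbf{P2.1}/\cite{rank-one} argument does not carry over as written. The paper avoids this entirely by never isolating a single block: all contributions from every constraint are lumped into one $\bm{\Pi}_k$, and the conclusion comes from $\bm{\Omega}_k^*\succeq\bm{0}$ forcing $\vartheta_{\max,k}=1$ (with the usual tacit assumption that this eigenvalue is simple). If you want to salvage the coordinate-sweep route, you would need to run the same eigenvalue trichotomy directly on the residual block-$k$ KKT system rather than trying to reduce it to \textbf{P2.1}.
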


 \begin{proof}
Please refer to Appendix \ref{Proposition4}.
 \end{proof}

 Proposition \ref{Rank-K} reveals that relaxing the rank-$K$ constraint on $\bm{W}$ will not affect the  solution accuracy when solving \textbf{P5.1}. Then, we have $\bm {W}_k^* = \bm{w}_k^* {\bm{w}_k^*}^H$, where $\bm{w}_k^*$ is the optimal solution to \textbf{P5} in the $x$-th iteration. By iteratively solving \textbf{P5.1}  to update  $\bm{w}_k^*$, we can finally derive the near-optimal design of  active beamforming vectors at the BS when the convergence is achieved.

	\subsection{Reflection and Transmission Coefficients Optimization}
	With the active beamforming vectors obtained in Section \ref{subsectionA}, we proceed to find the   solution of the reflection and transmission coefficients by solving \textbf{P6}, which is formulated  as
		\begin{equation}\tag{$\textbf{P6}$} 
		\begin{aligned}
			&\text{Find} ~\{\bm{v}_r,\bm{v}_t\} \\ 
			& \text{s.t.}~~  \text{C4}-\text{C7},~\text{C13}-\text{C15}.
		\end{aligned}
	\end{equation}
It should be noted that \textbf{P6} can be solved by applying the method used in Section \ref{multi_sectionB}.
Similarly,  $ \tilde{\bm {v}}_r$ and $ \tilde{\bm {v}}_t$ are introduced to deal with the constraint C7, and \textbf{P6} is then recast as 
	\begin{equation}\tag{$\textbf{P6.1}$} 
		\begin{aligned}
			&\text{Find} ~\{\bm{v}_r,\bm{v}_t,\tilde{ \bm{v}}_r,\tilde{\bm{v}}_t\} \\ 
			 \text{s.t.}~~& \text{C4}-\text{C6},~\text{C9}-\text{C11},~\text{C13}-\text{C15}.
		\end{aligned}
	\end{equation}
	To solve \textbf{P6.1} efficiently, the PDD  and Lagrangian dual methods are also applied. Specifically, the solution of  $ \{\tilde{ \bm{v}}_r, \tilde{ \bm{v}}_t \}$ and $\{\bm{v}_r, \bm{v}_t \}$ in \textbf{P6.1} can be derived by solving the sub-problems defined in  \textbf{P3.5} and \textbf{P6.2}, respectively. As the solving process of $ \{\tilde{ \bm{v}}_r, \tilde{ \bm{v}}_t \}$  has been discussed in Section \ref{multi_sectionB}, we focus on solving \textbf{P6.2} formulated as follows
		\begin{equation}\tag{$\textbf{P6.2}$} 
			\begin{aligned}
				\min_{\bm{v}_r,\bm{v}_t, \tilde{\bm \alpha}} &-\sum_{k=1}^{K(Q+1)+Q}\tilde{\alpha} _k +\frac{1}{2\rho}\sum_{\iota \in\{r,t\}}\|\tilde{ \bm{v}}_{\iota}-\bm{v}_{\iota}+\rho\bm{\lambda_{\iota}}   \| ^2 \\ 
				\text{s.t.}~~& {{\widetilde{\text{C13}}}}: {R}_{u,k}^{(p)}  \ge R_{u,\text{min}}+\tilde {\alpha}_k, ~k\in \mathcal{K},\\
				& {\widetilde {\text{C14} }}:{R}_{q,k}^{(s)} \ge R_{u, \text{min}} + \tilde {\alpha}_{qK+k},~k\in \mathcal{K},~q\in\mathcal{Q}, \\
				& {\widetilde {\text{C15} }}:{\gamma}_{u,q} \ge \gamma_{u, \text{min}} + {\tilde {\alpha}_{K(Q+1)+q}},~q\in\mathcal{Q}, ~{\text{C12} },
			\end{aligned}
		\end{equation}	
	where $ \tilde {\bm {\alpha}}=[\tilde{\alpha}_1,\ldots,\tilde{\alpha}_{K(Q+1)+Q}]^T$ is the residual variable vector.  Let $\hat {d}_{k,i} = \bm{h}_{p,k}^H \bm w_i$, $\hat{\bm {r}}_{k,i}^H=\bm {g}_{p,k}^H\text{diag}(\bm F \bm w_i) $, $\tilde {d}_{q,k} = {\bm{h}_{s,q}^H}  \bm w_k$, $\tilde{\bm {r}}_{q,k}^H={\bm {g}_{s,q}^H}\text{diag}(\bm F \bm w_k) $, $\hat{c}_k=\text{Tr}(\bm {D}_k \bm W)$, $\tilde{c}_{q,k}=\text{Tr}(\overline{\bm {D}}_{q,k} \bm W)$, $\hat{\bm R}_k=\sum_{i=1,i\neq k}^{K}(\hat{\bm {r}}_{k,i} \hat{\bm {r}}_{k,i}^H)$, $\tilde{\bm R}_{q,k}=\sum_{i=1,i\neq k}^{K}(\tilde{\bm {r}}_{q,i} \tilde{\bm {r}}_{q,i}^H)$,  $\tilde{\bm R}_q=\sum_{i=1}^{K}(\tilde{\bm {r}}_{q,i} \tilde{\bm {r}}_{q,i}^H)$, and $\tilde \chi_{q}=\text{Tr}(\overline{\bm {H}}_{q} \bm W)$. Then, the left-hand sides of the constraints {${\widetilde {\text{C13} }}-{\widetilde {\text{C15} }}$} are respectively recast as
	\begin{align} 
	\label{uni_C1formulate}
	&R_{u,k} = \frac{1}{2}\log_2(1+{|\hat{a}_{1,k}|^2}/{\hat{b}_k})+\frac{1}{2}\log_2(1+{|\hat{a}_{2,k}|^2}/{\hat{b}_k}),~k\in \mathcal{K},  \\
	\label{uni_C2formulate}
	&R_{q,k} = \frac{1}{2}\log_2(1+{|\tilde{a}_{1,q,k}|^2}/{\tilde{b}_{q,k}		})+\frac{1}{2}\log_2(1+{|\tilde{a}_{2,q,k}|^2}/{\tilde{b}_{q,k}}),~k\in \mathcal{K},~q\in \mathcal{Q}, \\  
	\label{uni_C3formulate}
	& \gamma_{u,q} = {L} (\bm {v}_t^H\tilde{\bm R}_q \bm {v}_t)/{(\tilde \chi_{q}+\sigma_{s,q}^2)},~q\in \mathcal{Q},		
	\end{align}	
	where $\hat{a}_{1,k}=\hat {d}_{k,k}+\hat{\bm {r}}_{k,k}^H \bm {v}_r$, $\hat{a}_{2,k}=\hat {d}_{k,k}-\hat{\bm {r}}_{k,k}^H \bm {v}_r$, $\hat{b}_k=\hat{c}_k+\bm {v}_r^H\hat{\bm R}_k\bm {v}_r+\sigma_{p,k}^2$,  $\tilde{a}_{1,q,k}=\tilde {d}_{q,k}+\tilde{\bm {r}}_{q,k}^H \bm {v}_t$, $\tilde{a}_{2,q,k}=\tilde {d}_{q,k}-\tilde{\bm {r}}_{q,k}^H \bm {v}_t$, and $\tilde{b}_{q,k}=\tilde{c}_{q,k}+\bm {v}_t^H\tilde{\bm R}_{q,k}\bm {v}_t+\sigma_{s,q}^2$. Since \eqref{uni_C1formulate}, \eqref{uni_C2formulate} and \eqref{uni_C3formulate} are all non-convex, which results in the challenge of solving \textbf{P6.2}. To transform \textbf{P6.2} into a convex one, we also exploit the lower-bounds of \eqref{uni_C1formulate}, \eqref{uni_C2formulate} and \eqref{uni_C3formulate}, which are respectively given by
		\begin{align} 
			\label{uni_C1formulate_lower}
			&R_{u,k}^{(p)} \ge \frac{1}{2 \ln2}( \ln(1+ \frac{|\hat{a}_{1,k}^{(x)}|^2}{\hat{b}_k^{(x)}} ) - \frac{|\hat{a}_{1,k}^{(x)}|^2}{\hat{b}_k^{(x)}} +\frac{2\mathcal{R}\{(\hat{a}_{1,k}^{(x)})^H\hat{a}_{1,k}\} }{\hat{b}_k^{(x)}} - \frac{|\hat{a}_{1,k}^{(x)}|^2(\hat{b}_k+|\hat{a}_{1,k}|^2)}{\hat{b}_k^{(x)}(\hat{b}_k^{(x)}+|\hat{a}_{1,k}^{(x)}|^2)}   +   \ln(1+ \frac{|\hat{a}_{2,k}^{(x)}|^2}{\hat{b}_k^{(x)}} ) \nonumber \\
			&~~~~ - \frac{|\hat{a}_{2,k}^{(x)}|^2}{\hat{b}_k^{(x)}}           
			+ \frac{2\mathcal{R}\{(\hat{a}_{2,k}^{(x)})^H\hat{a}_{2,k}\} }{\hat{b}_k^{(x)}} -\frac{|\hat{a}_{2,k}^{(x)}|^2(\hat{b}_k+|\hat{a}_{2,k}|^2)}{\hat{b}_k^{(x)}(\hat{b}_k^{(x)}+|\hat{a}_{2,k}^{(x)}|^2)} )  \triangleq \widetilde {R}_{u,k}^{(p)},~k \in \mathcal{K},\\
			\label{uni_C2formulate_lower}
			& R_{q,k}^{(s)} \ge \frac{1}{2 \ln2}( \ln(1+ \frac{|\tilde{a}_{1,q,k}^{(x)}|^2}{\tilde{b}_{q,k}^{(x)}} ) - \frac{|\tilde{a}_{1,q,k}^{(x)}|^2}{\tilde{b}_{q,k}^{(x)}} +\frac{2\mathcal{R}\{(\tilde{a}_{1,q,k}^{(x)})^H\tilde{a}_{1,q,k}\} }{\tilde{b}_{q,k}^{(x)}} \nonumber \\
			&~~~ - \frac{|\tilde{a}_{1,q,k}^{(x)}|^2(\tilde{b}_{q,k}+|\tilde{a}_{1,q,k}|^2)}{\tilde{b}_{q,k}^{(x)}(\tilde{b}_{q,k}^{(x)}+|\tilde{a}_{1,q,k}^{(x)}|^2)}   +   \ln(1+ \frac{|\tilde{a}_{2,q,k}^{(x)}|^2}{\tilde{b}_{q,k}^{(x)}} ) - \frac{|\tilde{a}_{2,q,k}^{(x)}|^2}{\tilde{b}_{q,k}^{(x)}}   + \frac{2\mathcal{R}\{(\tilde{a}_{2,q,k}^{(x)})^H\tilde{a}_{2,q,k}\} }{\tilde{b}_{q,k}^{(x)}} \\
			&~~~-\frac{|\tilde{a}_{2,q,k}^{(x)}|^2(\tilde{b}_{q,k}+|\tilde{a}_{2,q,k}|^2)}{\tilde{b}_{q,k}^{(x)}(\tilde{b}_{q,k}^{(x)}+|\tilde{a}_{2,q,k}^{(x)}|^2)} )  \triangleq \widetilde {R}_{q,k}^{(s)},~k \in \mathcal{K},~q\in \mathcal{Q}, \nonumber\\
			\label{uni_C3formulate_lower}
			& \gamma_{u,q}  \ge {\frac{L}{\tilde \chi_{q}+\sigma_{s,q}^2}}( (\bm {v}_t^{(x)})^H\tilde{\bm R}_q {v}_t^{(x)}  +  \mathcal{R}\{(\bm {v}_t^{(x)})^H (\tilde{\bm R}_q+\tilde{\bm R}_q^H)(\bm {v}_t-{v}_t^{(x)})  \}) \triangleq \widetilde{\gamma}_{u,q},~q\in \mathcal{Q}.	
		\end{align}		
	% Due to the limited space, the proof is neglected for simplicity.
	%  \begin{proof}
	% 	Please refer to Appendix \ref{Proposition5}.
	% \end{proof}
Based on \eqref{uni_C1formulate_lower}, \eqref{uni_C2formulate_lower} and \eqref{uni_C3formulate_lower}, \textbf{P6.2} can be rewritten as
 \begin{equation}\tag{$\textbf{P6.3}$} 
 	\begin{aligned}
 		\min_{\bm{v}_r,\bm{v}_t, \tilde{\bm \alpha}} & -\sum_{k=1}^{K(Q+1)+Q}\tilde{\alpha} _k +\frac{1}{2\rho}\sum_{\iota \in\{r,t\}}\|\tilde{ \bm{v}}_{\iota}-\bm{v}_{\iota}+\rho\bm{\lambda_{\iota}}   \| ^2 \\ 
 		\text{s.t.}~~& \widetilde{{\widetilde{\text{C13}}}}:{\widetilde {R}_{u,k}^{(p)}}  \ge R_{u,\text{min}} +\tilde {\alpha}_k, ~k\in \mathcal{K},\\
 		&\widetilde {\widetilde {\text{C14} }}:\widetilde{R}_{q,k}^{(s)} \ge {R_{u,\text{min}}}+\tilde {\alpha}_{qK+k},~k\in \mathcal{K},~q\in \mathcal{Q}, \\
 		&  \widetilde{\widetilde {\text{C15} }}:\widetilde{\gamma}_{u,q} \ge {\gamma_{u,\text{min}}}+{\tilde {\alpha}_{K(Q+1)+q}},~q\in \mathcal{Q},~  {\text{C12} }.
 	\end{aligned}
 \end{equation}
Similar to \textbf{P3.4}, \textbf{P6.1} is a convex problem and  can be solved by implementing the CVX tool.  
 
	The algorithm for solving \textbf{P4}  is similar to Algorithm \ref{AlgorithmA} and omitted here for simplicity.  The overall complexity for solving \textbf{P4} is  $\mathcal{O}(I_1I_2I_4\sqrt{K(Q+1+N)+Q} \log(\frac{1}{\epsilon})\cdot(n_1(K(Q+1)+Q+(KN)^3)+ 
		n_1^2(K(Q+1)+Q+(KN)^2)+n_1^3  )   + I_1 I_2 I_3\sqrt{2(K(Q+1)+Q+M)} \log(\frac{1}{\epsilon})\cdot(n_2( 2(K(Q+1)+Q)) 
		+n_2^2(2(K(Q+1)+Q))+n_2(4M)+n_2^3  ) )$, where $n_1=(KN)^2$, $n_2=2M+K+KQ+Q$, and $I_4$ denotes the iteration times for solving  \textbf{P5.1} by applying the  SCA technique.

	\section{Numerical Results}
	\label{section5}
	\subsection{Simulation Setup}
	In this section, we conduct numerical simulations to verify the performance of the proposed schemes. We consider a two-dimensional coordinate model, in which the BS and STAR-RIS are located at $(0, 0)$ and (100 m, 0), respectively. The PUs are randomly distributed in a circle with a center (100 m, 20m) and a radius of 5 m, while the SUs are randomly distributed in a circle with a center (100 m, -20m) and a radius of 5 m. We consider that all signal transmissions suffer from the large-scale fading and small-scale fading. According to \cite{Zeng yong}, the large-scale fading for all links is modeled as $D(d) = (\zeta/(4\pi))^2 d^{-\varsigma}$, which is a function of the distance between  two nodes (i.e., $d$), the wavelength (i.e., $\zeta$), and the path-loss exponent (i.e., $ \varsigma $). The small-scale fading for the links related with the STAR-RIS is modeled as the Rician fading. For example, $\bm{F}$ is expressed as
	$$\bm{F} =  \sqrt{ \frac{ D(d_{BS,RIS}) \kappa_F}{\kappa_F + 1 }} \bm{F}^{\text{LoS}} + \sqrt{  \frac{D(d_{BS,RIS})  }{\kappa_F + 1}} \bm{F}^{\text{NLoS}},$$ 
	where $ \bm{F}^{\text{LoS}} $, $\bm{F}^{\text{NLoS}}$,  $\kappa_F$, and $d_{BS,RIS}$ are the line-of-sight (LoS) component,  non-LoS component and Rician factor, and the distance between the BS and the STAR-RIS, respectively.
	$\bm{F}^{\text{LoS}}$ is expressed as $\bm{F}^{\text{LoS}}= \bm{\beta}_M(\upsilon_\text{AoA}) \bm{\beta}_N^H(\upsilon_\text{AoD})$, where 
	$\bm{\beta}_{X}(\upsilon) = [1, e^{j\pi \text{sin}(\upsilon) }, \ldots, e^{j\pi (X-1) \text{sin}(\upsilon) }]^T$, $X\in \{M,N\}$,
	$\upsilon_\text{AoA}$ and $\upsilon_\text{AoD}$ are the angle of arrival and angle of departure, respectively.
	$\bm{F}^{\text{NLoS}}$ is modeled as the Rayleigh fading, each element of which satisfies  $\mathcal{CN}(0,1)$. Similarly, the definitions of $\bm{g}_{p,k}^H$  and $\bm{g}_{s,q}^H$ can be obtained.
	However, the small-scale fading for the links unrelated with the STAR-RIS  is modeled as  the Rayleigh fading with  $\mathcal{CN}(0,1)$. The path-loss exponents from the BS to PUs/SUs, from the STAR-RIS to PUs/SUs, and from the BS to STAR-RIS are set at 3.8, 2, and 2.4, respectively.
	The other parameters are set as follows: $N=4$, $M=20$, $K=4$, $Q=1$, $L=50$, $R_{b,\text{min}} = 2 $bit/s/Hz, $R_{u,\text{min}}=0.42$ bit/s/Hz, $\gamma_{b, \text{min}} = \gamma_{u,\text{min}} = 30$ dB, $\sigma_{p,k}^2=\sigma_{s,q}^2=-90$ dBm,  $\zeta = 750$ MHz, and $\mu=\mu_k=0.01$.

	Three baseline schemes are considered for verifying the superiority of the proposed schemes.
	\begin{itemize}
		\item{Baseline scheme 1: The amplitudes of  reflection and transmission coefficients at each element  are equal, i.e., $\beta_{m}^r=\beta_{m}^t=0.5$, while the other variables are jointly optimized without the phase correlation constraint.}
		\item{Baseline scheme 2: The amplitudes of the reflection  and transmission coefficients at each element  are equal, i.e., $\beta_{m}^r=\beta_{m}^t=0.5$, while the phase shifts are randomly generated without the phase correlation constraint.}	
		\item{Baseline scheme 3: The backscattering device with one antenna is considered as the secondary transmitter to replace the STAR-RIS, while the other variables are jointly optimized.}	
	\end{itemize}
It should be noted that  the baseline schemes are with the imperfect SIC in Figs. \ref{fig:mul_gammac}-\ref{fig:numberM_bro} but are with the perfect SIC in Figs. \ref{fig:numberN_bro}-\ref{fig:numberuser_uni}.

	Moreover, the proposed schemes without the phase correlation constraint are also investigated for evaluating the performance loss caused by this constraint.

	\subsection{Performance Evaluation}
	
\begin{figure}
	\centering
	\includegraphics[width=0.45\linewidth]{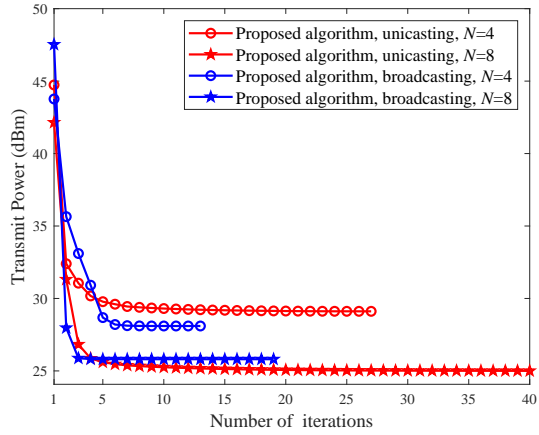}
	\caption{Transmit power versus the iteration process.}
	\label{fig:convengence}
\end{figure}

	The convergence performance of the proposed algorithms is investigated in Fig. \ref{fig:convengence}.  As seen from Fig. \ref{fig:convengence}, the value of  transmit power converges to a stationary point after several iterations, which confirms the convergence efficiency of the proposed algorithm for each signal model. 
	Specifically, it can be observed that the proposed algorithm for the unicasting signal model converges slower than that for the broadcasting signal model. It is because  more iterations are required to meet a larger number of the QoS constraints.  Moreover, the number of antennas at the BS is also a factor affecting the required iterations to converge, i.e., exploiting more antennas at the BS increases the required number of iterations.

	  \begin{figure*}
   \begin{minipage}[t]{0.50\linewidth}
    \centering
    \includegraphics[width=0.8 \linewidth]{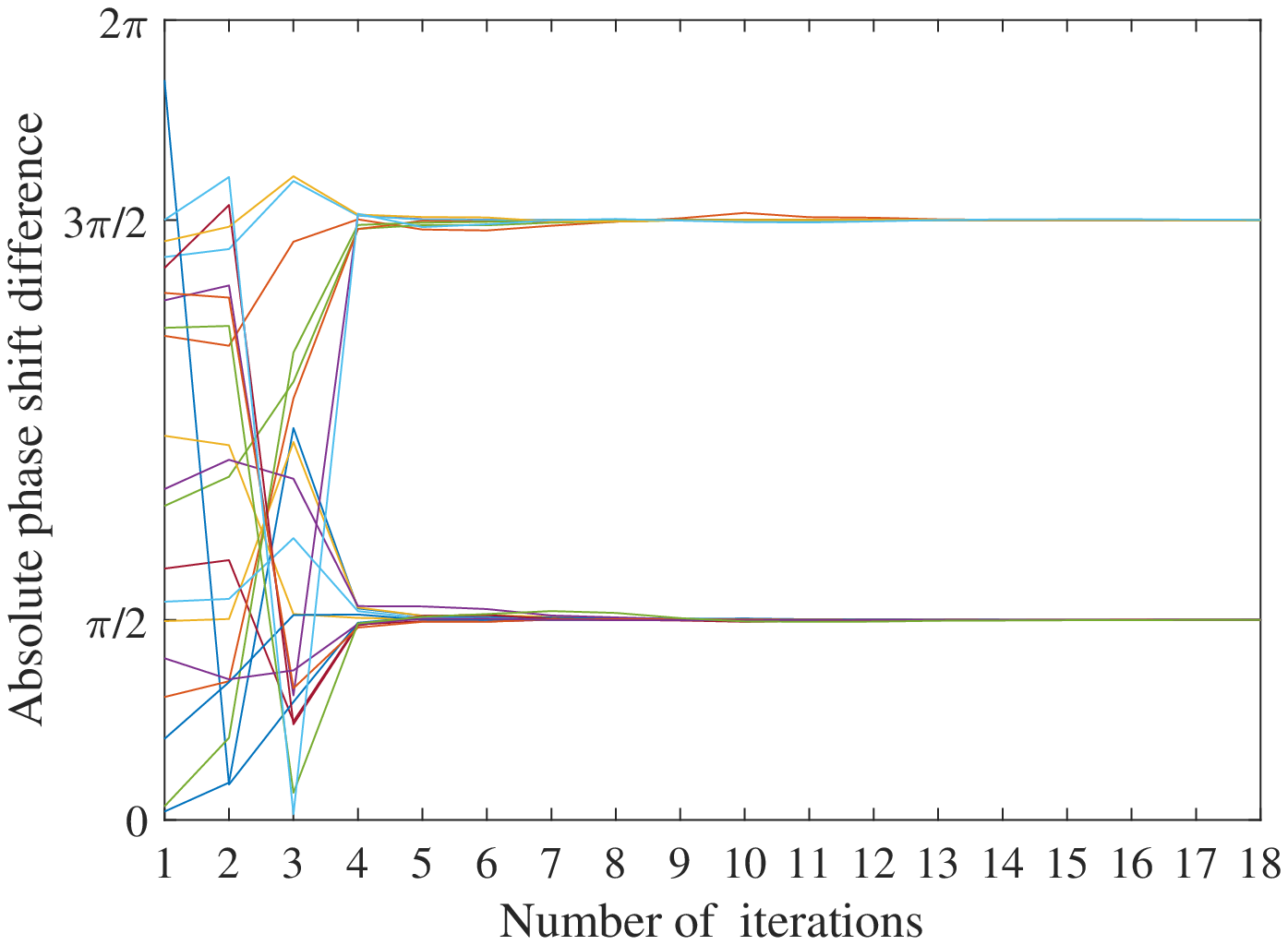}
    \caption{Absolute phase shift differences versus the iteration  process, where the lines of different colors represent the absolute differences between reflection phase shifts and  transmission phase shifts for STAR-RIS elements.}
    \label{fig:phasechange}
  \end{minipage}%
  \begin{minipage}[t]{0.50\linewidth}
    \centering
    \includegraphics[width=0.8 \linewidth]{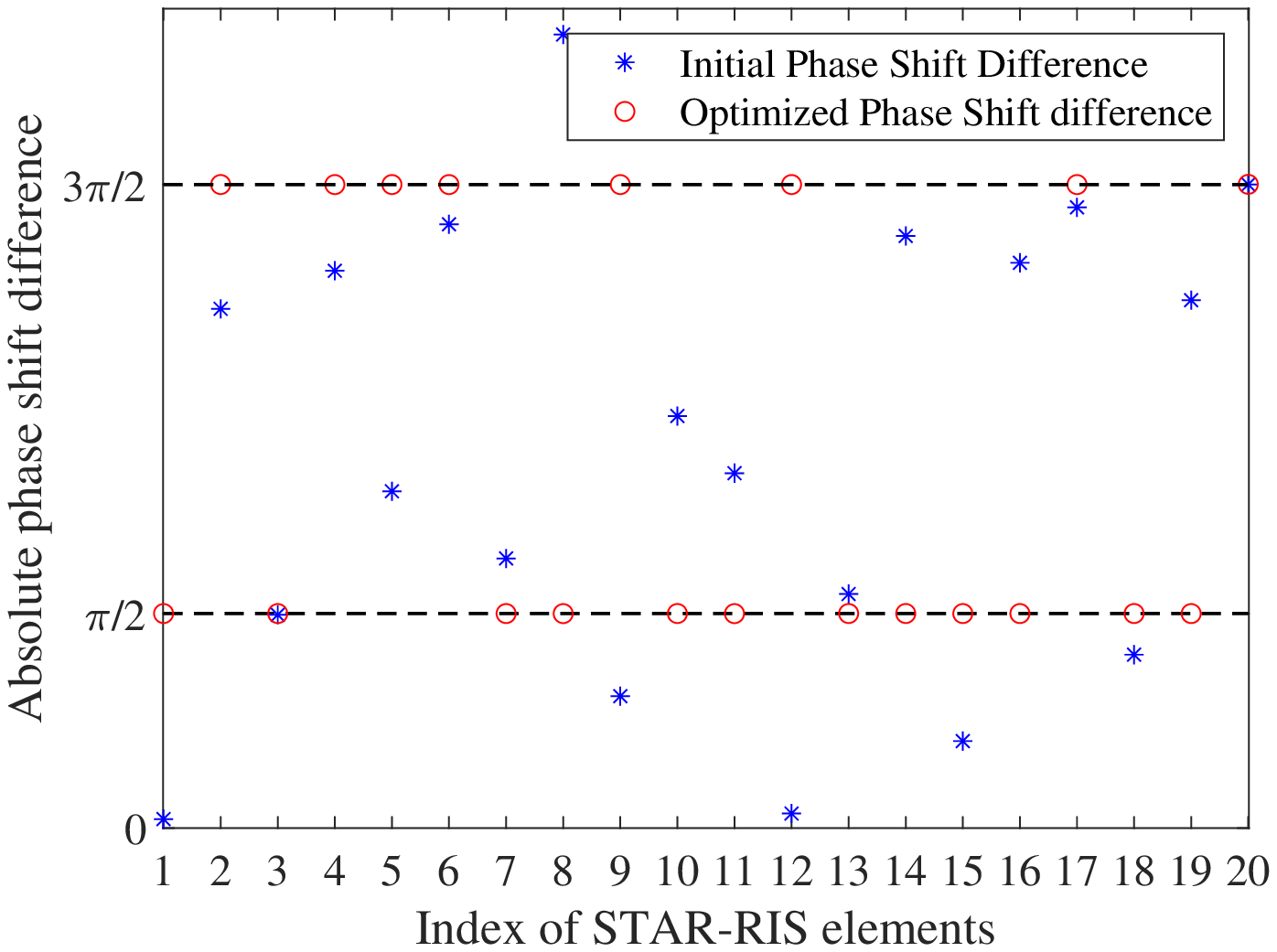}
    \caption{Scatter of absolute phase shift differences for the STAR-RIS.}
    \label{fig:scatterdif}
  \end{minipage}
\end{figure*}

	% \begin{figure}
	% 	\centering
	% 	\includegraphics[width=0.5 \linewidth]{phasechange}
	% 	\caption{Absolute phase shift differences versus the iteration process, where the lines of different colors represent the absolute differences between reflection phase shifts and transmission phase shifts for STAR-RIS elements.}
	% 	\label{fig:phasechange}
	% \end{figure}

	% In the following, we first evaluate the system performance under the perfect SIC scenario, i.e., $\mu=\mu_k=0$.

	In Fig. \ref{fig:phasechange}, we depict the updating process of the absolute phase shift differences for all elements (i.e., $|\theta_m^r-\theta_m^t|$ for $m \in \mathcal{M}$). It can be observed that the absolute  phase shift differences for all elements are  irregular during the first four iterations, but then converge to $\frac{\pi}{2}$ or $\frac{3\pi}{2}$, which confirms that the phase correction constraint is satisfied.   
	Fig. \ref{fig:scatterdif} shows the effectiveness of phase shift optimization. In the initial step of the proposed algorithm, the phase shift differences are randomly distributed between $0$ and $2\pi$. After implementing the proposed algorithm, the phase shift  difference for each element will be $\frac{\pi}{2}$ or $\frac{3\pi}{2}$. This observation indicates that the phase correlation constraint affects the design of phase shifts for all elements.

	% 	\begin{figure}
	% 	\centering
	% 	\includegraphics[width=0.5 \linewidth]{scatterdif}
	% 	\caption{Scatter of absolute phase shift differences for the STAR-RIS.}
	% 	\label{fig:scatterdif}
	% \end{figure}

	  \begin{figure*}
  \begin{minipage}[t]{0.50\textwidth}
    \centering
    \includegraphics[width=0.8 \linewidth]{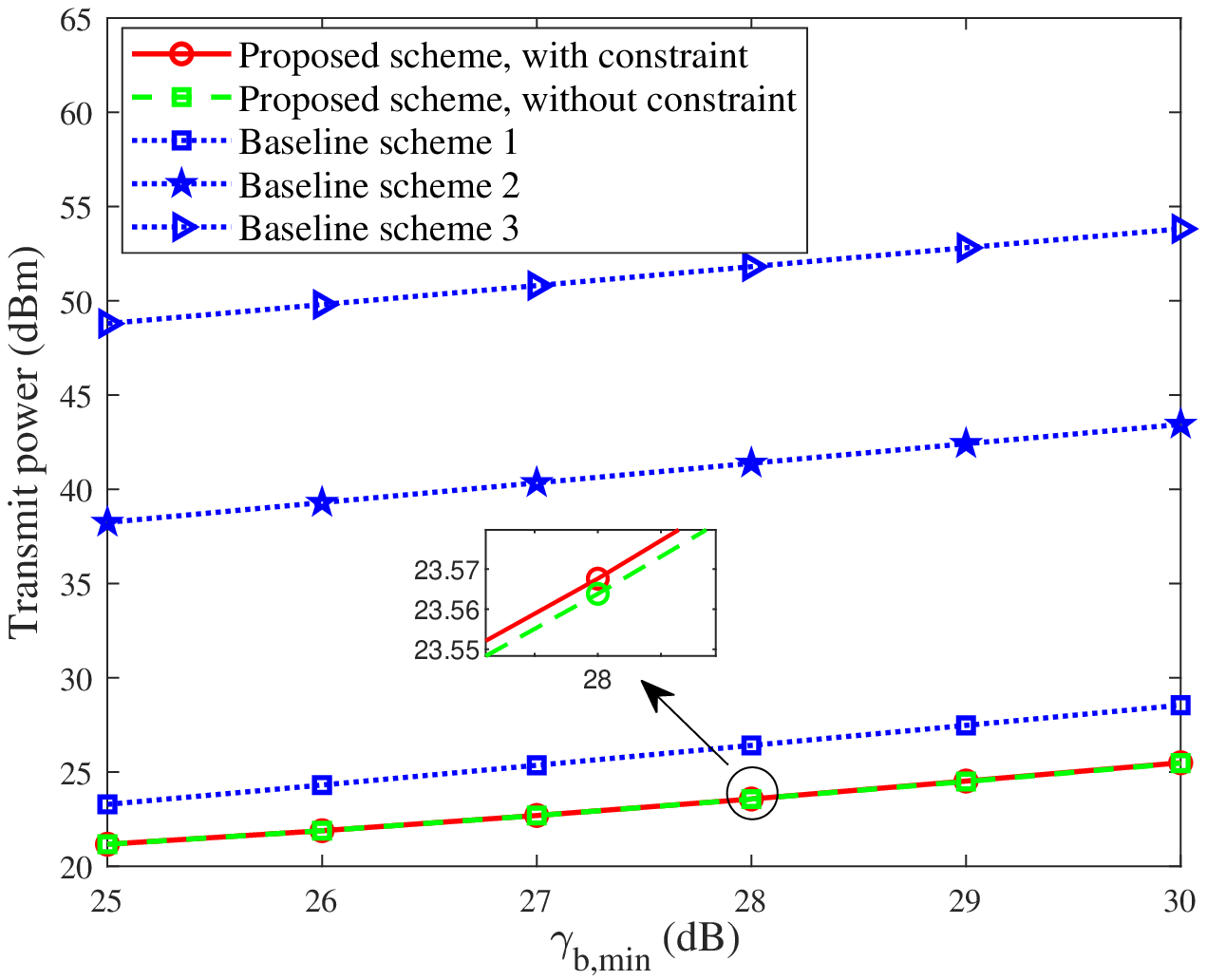}
    \caption{Transmit power versus the minimum SNR requirement \protect\\ for the broadcasting signal model.}
    \label{fig:mul_gammac}
  \end{minipage}%
  \begin{minipage}[t]{0.50\linewidth}
    \centering
    \includegraphics[width=0.8 \linewidth]{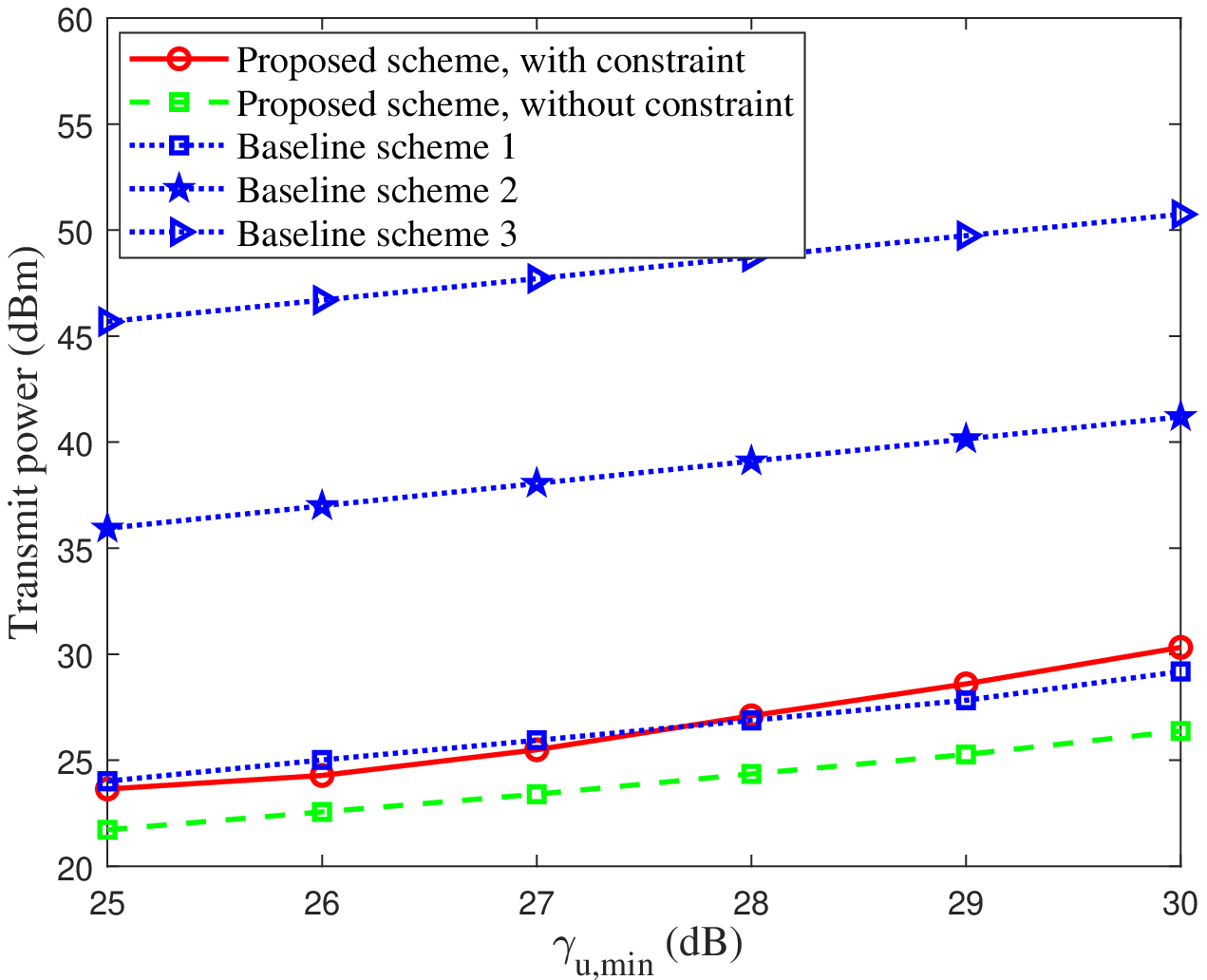}
    \caption{Transmit power versus the minimum SNR requirement \protect\\ for the unicasting signal model.}
    \label{fig:bro_gammac}
  \end{minipage}
\end{figure*}

The transmit power versus the minimum SNR requirement for the broadcasting signal model  is shown in Fig. \ref{fig:mul_gammac}. As observed in Fig. \ref{fig:mul_gammac}, a higher SNR requirement for decoding $c$ improves the transmit power. The reason is that more transmit power should be consumed for satisfying the SNR requirement. Compared to the baseline schemes, our proposed scheme consumes the lowest transmit power, which indicates the effectiveness of the proposed scheme. Specifically, from the comparison with the baseline scheme 1 and baseline scheme 2,  we observe that the optimization of  STAR-RIS coefficients (i.e., amplitude coefficients and phase shifts) is vital for enhancing the efficiency for simultaneous reflection and transmission empowered by the STAR-RIS. Moreover, the transmit power of the baseline scheme 3 is about twice as much as that of our proposed scheme, which confirms that the application of STAR-RIS can significantly improve the communication efficiency for the secondary transmission.
 Fig. \ref{fig:bro_gammac} investigates the impact of minimum SNR requirement on the transmit power for the unicasting signal model.  Similar to Fig. \ref{fig:mul_gammac},  our proposed scheme consumes less transmit power than the baseline scheme 2 and the baseline scheme 3.  Moreover, compared to the proposed scheme without the phase correlation constraint, the existence of this constraint degrades the system performance since the feasibility of designing phase shifts is constrained due to the fact that the phase shift difference for each element has to be $\frac{\pi}{2}$ or $\frac{3\pi}{2}$.

	% \begin{figure}
	% \centering
	% \includegraphics[width=0.5 \linewidth]{bro_gammac}
	% \caption{Transmit power versus the minimum SNR requirement for the broadcasting signal model.}
	% \label{fig:mul_gammac}
	% \end{figure}

	% \begin{figure}
	% 	\centering
	% 	\includegraphics[width=0.5 \linewidth]{uni_gammac}
	% 	\caption{Transmit power versus the minimum SNR requirement for the unicasting signal model.}
	% 	\label{fig:bro_gammac}
	% \end{figure}

  \begin{figure*}
  \begin{minipage}[t]{0.50\textwidth}
    \centering
    \includegraphics[width=0.8 \linewidth]{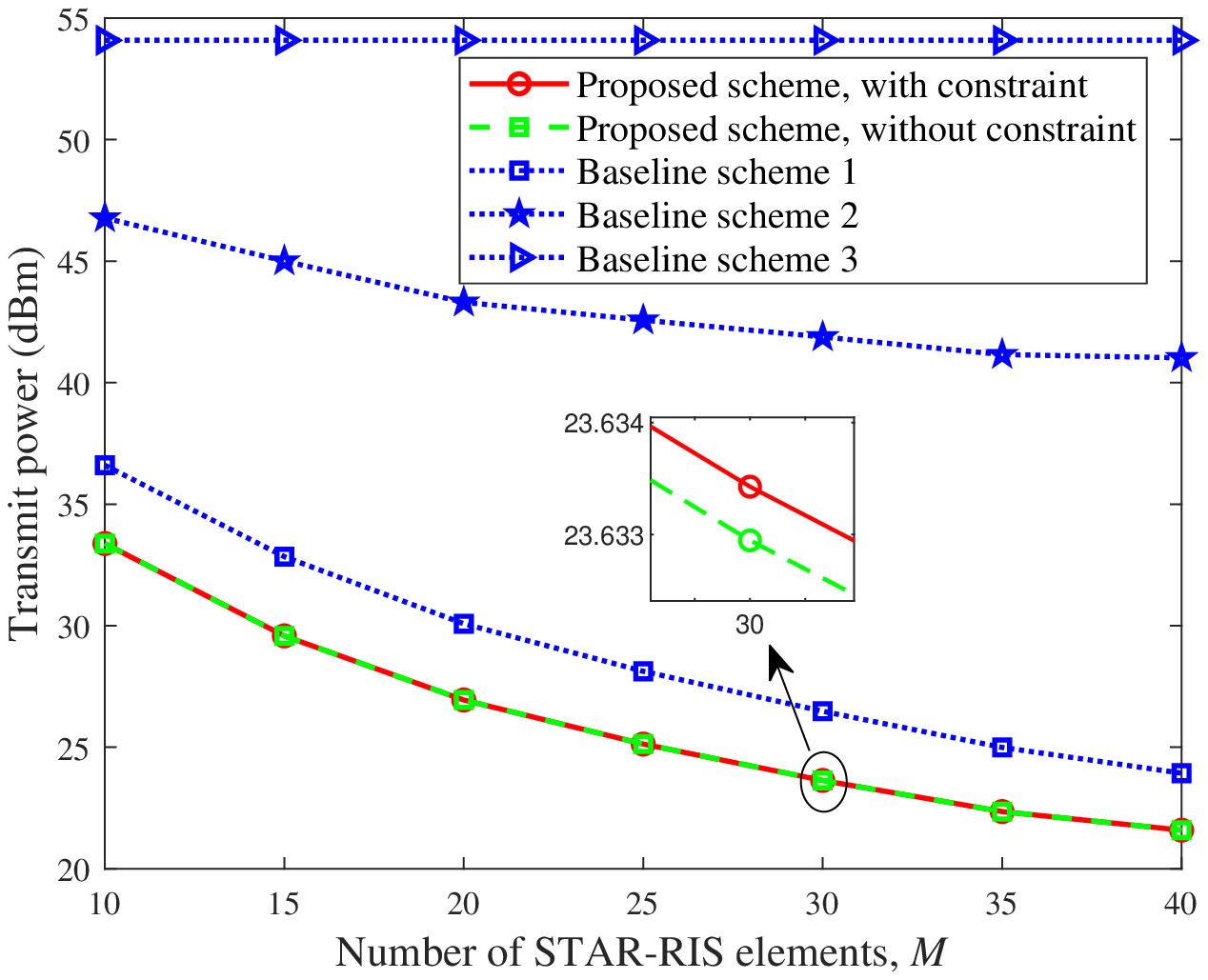}
    \caption{Transmit power versus the number of STAR-RIS elements \protect\\ for the broadcasting signal model}
    \label{fig:numberM_mul}
  \end{minipage}%
  \begin{minipage}[t]{0.50\linewidth}
    \centering
    \includegraphics[width=0.8 \linewidth]{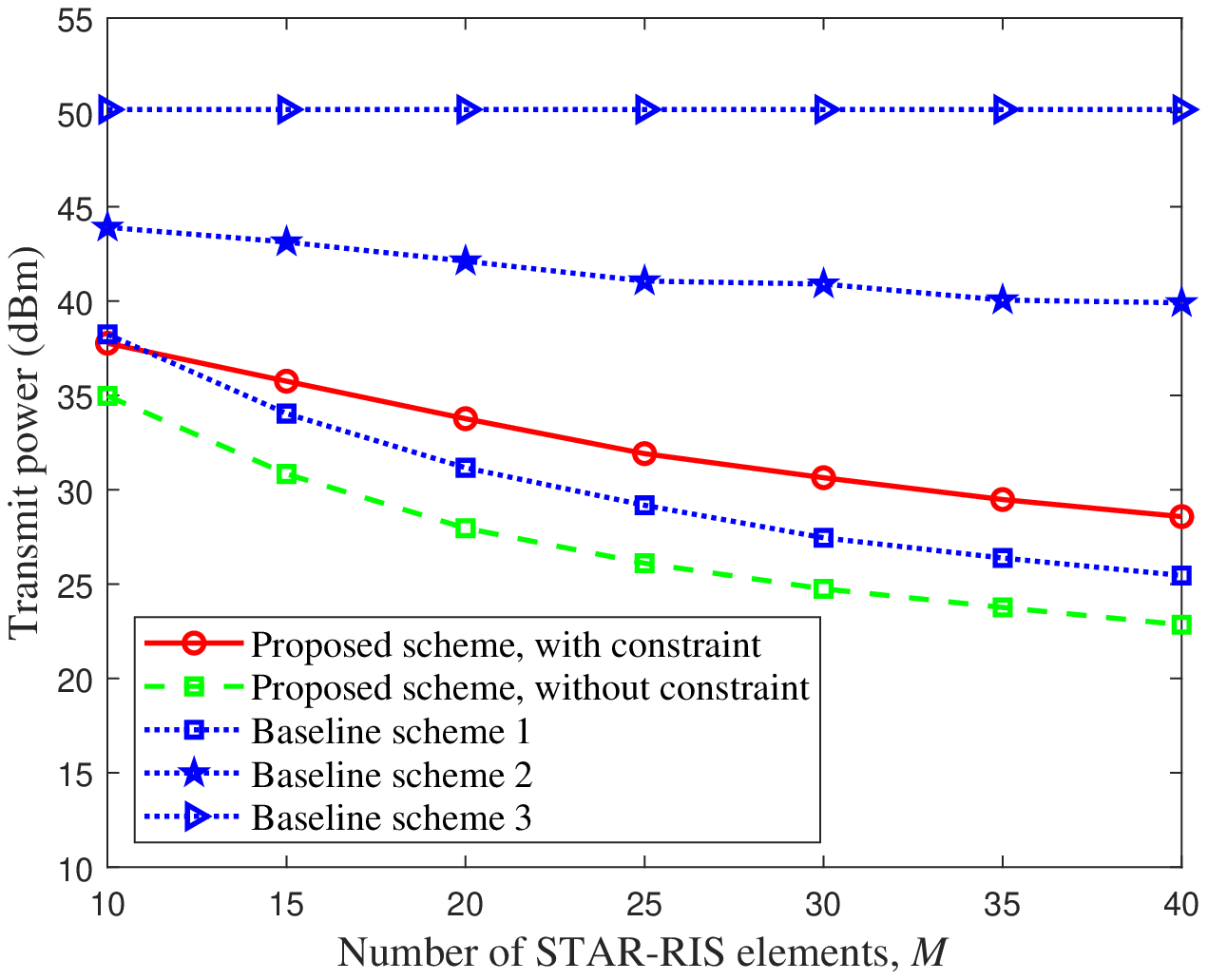}
    \caption{Transmit power versus the number of STAR-RIS elements \protect\\ for the unicasting signal model.}
    \label{fig:numberM_bro}
  \end{minipage}
\end{figure*}

We investigate the transmit power versus the number of STAR-RIS elements for both  models in Fig. \ref{fig:numberM_mul} and Fig. \ref{fig:numberM_bro}, respectively. As shown in Fig. \ref{fig:numberM_mul} and Fig. \ref{fig:numberM_bro}, by increasing the number of elements from $10$ to $40$, the required transmit power for satisfying all  constraints is reduced. It demonstrates that  deploying more  elements at the  STAR-RIS is an efficient way for enhancing system communication efficiency as more additional reflection and transmission links can be achieved for information delivery.
 Furthermore, it is seen that the gap between  the proposed scheme and the baseline schemes  is a non-decreasing function  with the number of STAR-RIS  elements.  Compared to the proposed scheme without the phase correlation constraint, the proposed scheme with the phase correlation constraint consumes more transmit power, especially for the unicasting signal model. The reason is that for the proposed scheme with the phase correlation constraint, more restrictive constraints are required. For the unicasting signal model, the proposed scheme with the phase correlation constraint even consumes a bit more power than the baseline scheme 1 when the number of elements is small, which again indicates the effect of the phase correction constraint on performance degradation. However, by removing this constraint, the proposed scheme can reduce up to 3.26 dBm transmit power compared to the baseline scheme 1.

 \begin{figure*}
	\begin{minipage}[t]{0.50\textwidth}
		\centering
		\includegraphics[width=0.8 \linewidth]{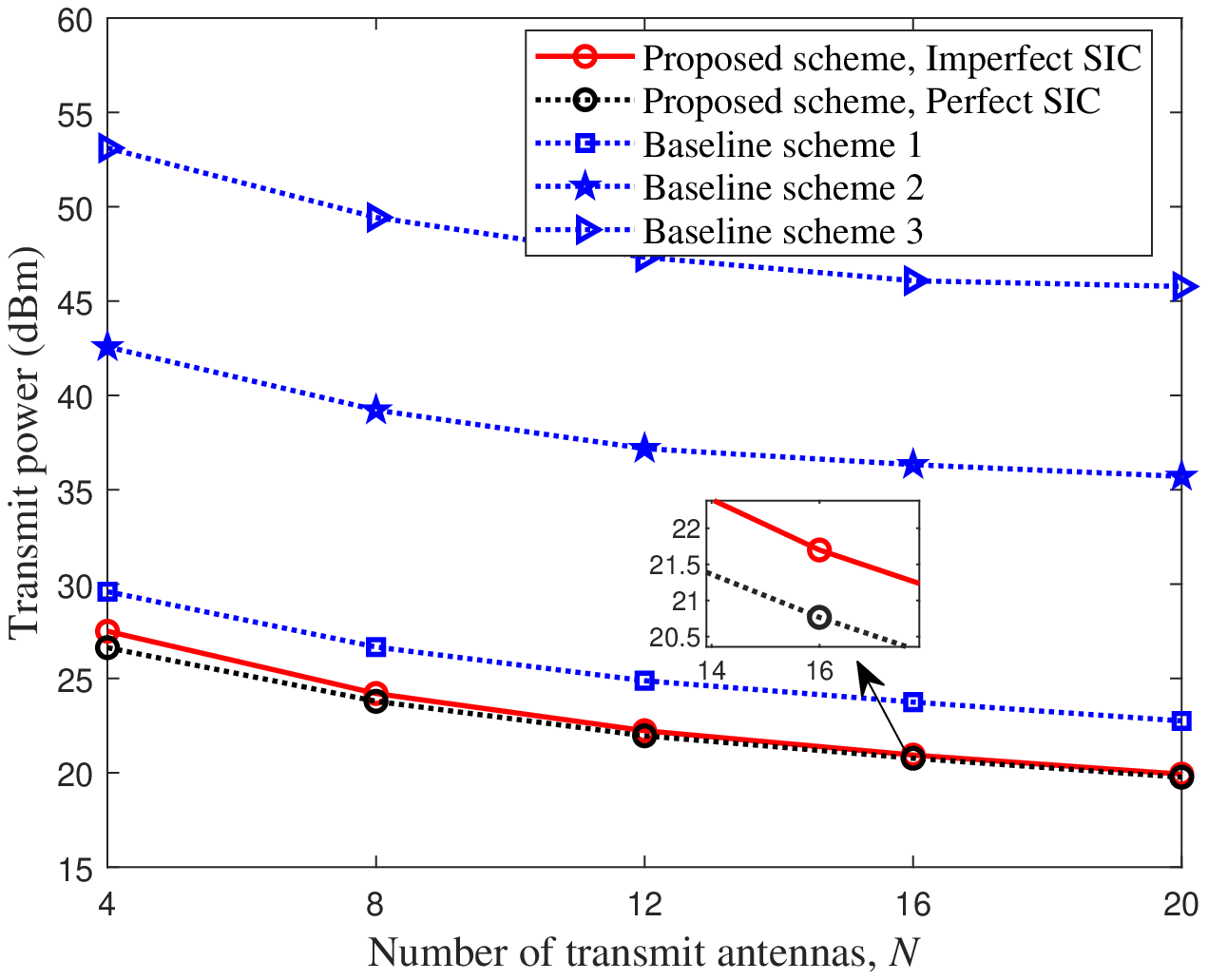}
		\caption{Transmit power versus the number of transmit antennas \protect\\ for the broadcasting signal model}
		\label{fig:numberN_bro}
	\end{minipage}%
	\begin{minipage}[t]{0.50\linewidth}
		\centering
		\includegraphics[width=0.8 \linewidth]{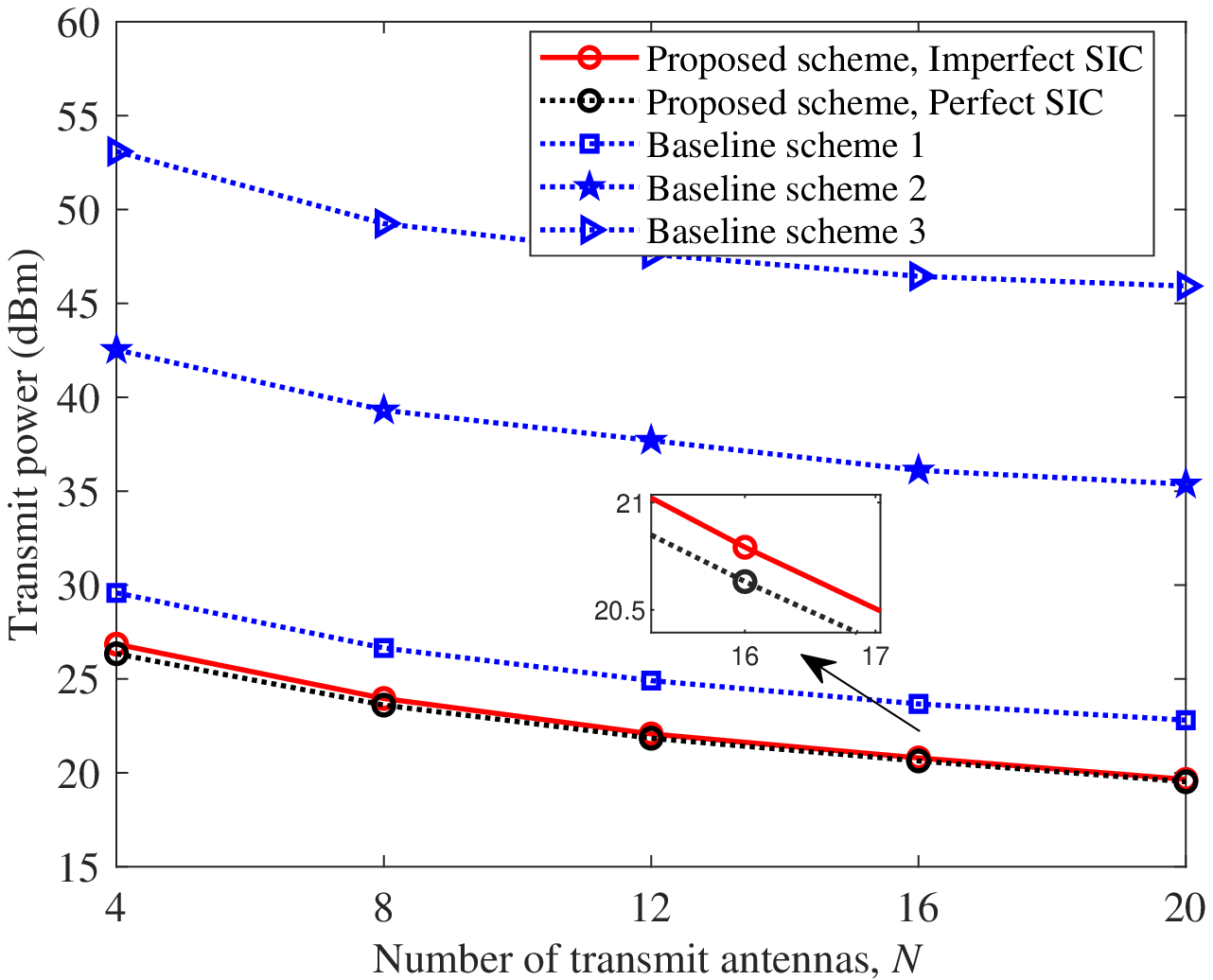}
		\caption{Transmit power versus the number of transmit antennas \protect\\ for the unicasting signal model.}
		\label{fig:numberN_uni}
	\end{minipage}
\end{figure*}
	
In Fig. \ref{fig:numberN_bro} and Fig. \ref{fig:numberN_uni}, the transmit power versus the number of transmit antennas for both models is investigated, where $K=2$ and  $\mu=\mu_k=0.04$.  As observed in Fig. \ref{fig:numberN_bro} and Fig. \ref{fig:numberN_uni}, increasing the number of antennas at the BS reduces
 the  required transmit power of all schemes. It is due to the fact that a higher antenna gain can be obtained by increasing the number of transmit antennas, which enhances the efficiency of both primary and secondary transmissions. In addition, compared to the baseline schemes with the perfect SIC, the proposed scheme with the imperfect SIC even consumes less transmit power, which verifies the effectiveness of the proposed scheme.

  \begin{figure*}
	\begin{minipage}[t]{0.50\textwidth}
		\centering
		\includegraphics[width=0.8 \linewidth]{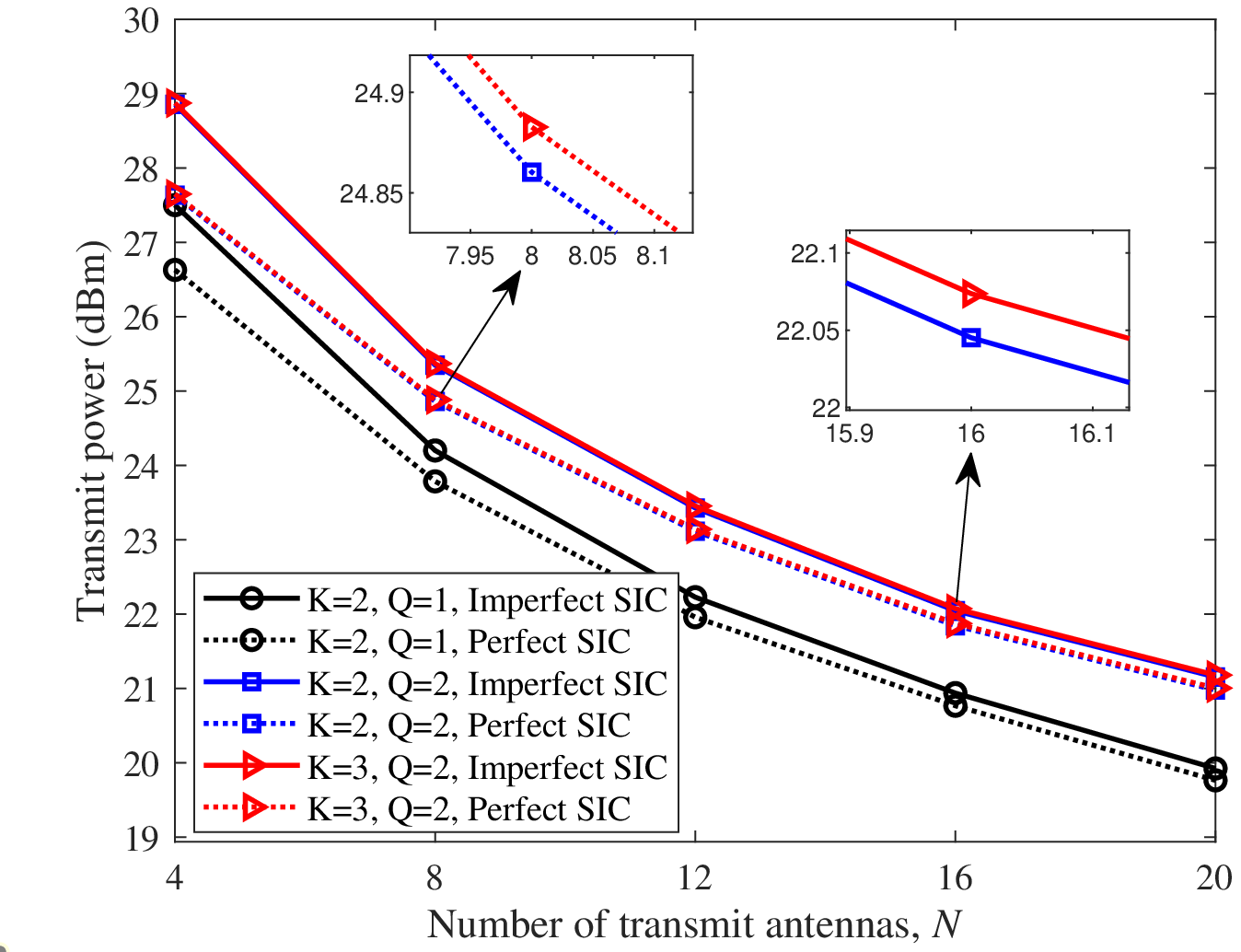}
		\caption{Transmit power versus the number of transmit antennas \protect\\ for the broadcasting signal model}
		\label{fig:numberuser_bro}
	\end{minipage}%
	\begin{minipage}[t]{0.50\linewidth}
		\centering
		\includegraphics[width=0.8 \linewidth]{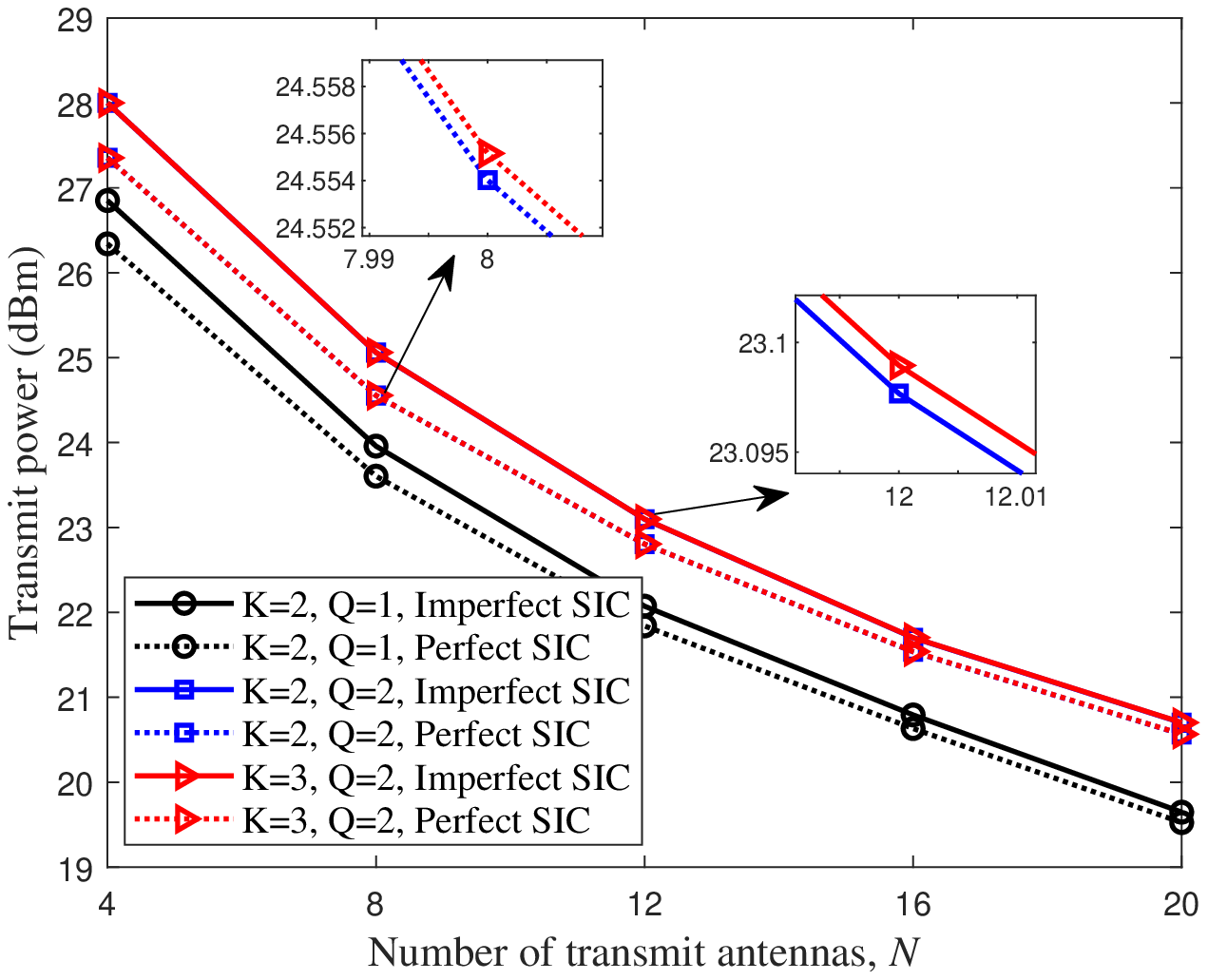}
		\caption{Transmit power versus the number of transmit antennas \protect\\ for the unicasting signal model.}
		\label{fig:numberuser_uni}
	\end{minipage}
\end{figure*}
	
Finally, we show the impact of the imperfect SIC on the transmit power for both models in Fig. \ref{fig:numberuser_bro} and Fig. \ref{fig:numberuser_uni}, respectively, where $\mu=\mu_k=0.04$. It can be found  that compared to the perfect SIC scenario, the imperfect SIC will result in more transmit power consumption at the BS.  The reason is that  the residual interference caused by the imperfect SIC reduces the SINR at the PUs and SUs, thus consuming a bit more transmit power to satisfy the QoS constraints. Moreover, the results in  Fig. \ref{fig:numberuser_bro} and Fig. \ref{fig:numberuser_uni} indicate that the number of PUs and SUs is an important factor affecting the transmit power at the BS. Specifically, the deployment of  more PUs and/or SUs may increase the transmit power. It is because a bit more transmit power is required to guarantee the QoS constraints corresponding to the worst-case of channel conditions.

	\section{Conclusions}
	\label{section6}
	In this paper, we have proposed a new idea of using   the STAR-RIS for SR systems to achieve three objectives, i.e., (i) guaranteeing that the PUs and SUs can be deployed on different sides of the STAR-RIS and thus providing a promising solution to support indoor and outdoor transmissions, (ii) enhancing the primary transmission efficiency  by constructing passive beamforming, and (iii) serving as a transmitter to passively deliver secondary information to the SUs. We have investigated the transmit power minimization problems for both broadcasting and unicasting signal models and proposed a BCD based algorithm with SDR, PDD, and SCA techniques to solve them. Numerical results have been conducted to demonstrate the superior performance of our proposed scheme. The numerical results reveal three important observations: 1) the phase correlation constraint results in  the increase of transmit power due to the reduced feasibility for designing phase shifts; 2) the optimization of simultaneous reflection and transmission coefficients is an important factor of enhancing system performance; 3) the STAR-RIS without optimizing reflection and transmission coefficients can even bring a higher system efficiency compared to the backscattering device enabled scheme.

	\appendices
	% \begin{appendices}

		\section{Proof of Proposition \ref{Rank-K}}
	\label{Proposition4}
		We first rewrite \textbf{P5.1} by introducing some auxiliary variables, which is given by 
	\begin{align}\tag{$\textbf{P.A}$}
		\max_{\bm W\succeq \bm 0} ~~& -\text{Tr}(\bm W) \\ 
		\text{s.t.}~~
		& {\text{A.1} }:	
		\frac{1}{2}\log_2(b_{1,k} +\sigma_{p,k}^2 )+\frac{1}{2}\log_2(b_{2,k}+\sigma_{p,k}^2 ) -\log_2(\text{Tr}(\bm {B}_{3,k} \bm W^{(x)})+\sigma_{p,k}^2 ) \nonumber\\
		&-\frac{b_{3,k}-\text{Tr}(\bm {B}_{3,k}\bm W^{(x)})}{(\text{Tr}(\bm {B}_{3,k}\bm W^{(x)})+\sigma_{p,k}^2)\ln2} \ge R_{\text{u,min}},~ k \in \mathcal{K}, \nonumber \\ 	
		& {\text{A.2} }:	
		\frac{1}{2}\log_2(\overline{b}_{1,q,k}+\sigma_{s,q}^2 )+\frac{1}{2}\log_2(\overline{b}_{2,q,k}+\sigma_{s,q}^2 ) -\log_2(\text{Tr}(\overline{\bm {B}}_{3,q,k} \bm W^{(x)})+\sigma_{s,q}^2 ) \nonumber \\
		& -\frac{\overline{b}_{3,q,k}-\text{Tr}(\overline{\bm {B}}_{3,q,k}\bm W^{(x)})}{(\text{Tr}(\overline{\bm {B}}_{3,q,k}\bm W^{(x)})+\sigma_{s,q}^2)\ln2}
		\ge R_{\text{u,min}},~ k \in \mathcal{K}, ~q \in \mathcal{Q}, \nonumber \\
		&{\text{A.3} }:
		\text{Tr}(\bm {B}_{u,k}\bm W)\ge b_{u,k}, u\in \{1,2,3\}, k\in \mathcal{K}, \nonumber \\
		&{\text{A.4} }:
		\text{Tr}(\overline{\bm {B}}_{u,q,k} \bm W)\ge \overline{b}_{u,q,k}, u\in \{1,2,3\}, ~ k \in \mathcal{K}, ~q \in \mathcal{Q},\nonumber \\
		& {\text{A.5} }:
		\text{Tr}(\overline{\bm G}_q \bm W)\ge \sigma_{s,q}^2 \gamma_{u,\text{min}},~q \in \mathcal{Q},
	\end{align}
where $\overline{\bm G}_q=L \overline{\bm R}_q -\gamma_{u,\text{min}} \overline{\bm H}_q$. The Lagrangian function of \textbf{P.A} about $\bm W$ is formulated as 
	$\mathcal{L}(\bm W) =\tilde{\mu} -\text{Tr}(\bm W) + \sum_{u=1}^{3}\sum_{k=1}^{K}\varepsilon_{u,k} \text{Tr}(\bm {B}_{u,k} \bm W) +\sum_{u=1}^{3}\sum_{q=1}^{Q}\sum_{k=1}^{K}\overline{\varepsilon} _{u,q,k} \text{Tr}(\overline{\bm {B}}_{u,q,k} \bm W)+ \sum_{q=1}^{Q}\tilde{\varepsilon}_q  \text{Tr} (\overline{ \bm G}_q \bm W)+ \text{Tr} (\bm \Omega \bm W)$,
where $\tilde{\mu}$ represents the term unrelated to $\bm W$, $\varepsilon_{u,k}$, $ \overline{\varepsilon} _{u,q,k} $, and $\tilde{\varepsilon} _{q}$ are the Lagrangian multipliers for the constraints A.3, A.4, and A.5, respectively. $\bm \Omega \succeq 0$  is the Lagrangian multiplier matrix related to $\bm W \succeq 0$. According to the KKT conditions of \textbf{P.A}, we can obtain
 \begin{align}
	\label{KKT1}
   &-\bm {I}_{NK \times NK} + \bm \Pi + \bm {\Omega}^*= \bm 0,\\
   \label{KKT2}
   & \text{Tr}(\bm {\Omega}^* \bm W^*) =0,
\end{align}
where $ \bm \Pi = \sum_{u=1}^{3}\sum_{k=1}^{K}( \varepsilon_{u,k}^* \bm {B}_{u,k}+ \sum_{q=1}^{Q}\overline{\varepsilon} _{u,q,k}^* \overline{\bm {B}}_{u,q,k})+\tilde{\varepsilon} _{q}^* \overline{ \bm G}_q $, $  \varepsilon_{u,k}^* $, 	$ \overline{\varepsilon} _{u,q,k}^* $, $ \tilde{\varepsilon} _{q}^* $, and $ \bm {\Omega}^* $ are the optimal Lagrangian multipliers. It is easy to find that $\bm \Pi$ is a block diagonal matrix, which is the summation of the block diagonal matrices $\bm {B}_{u,k} $, $ \overline{\bm {B}}_{u,q,k} $, and $\overline{ \bm G}_q$. Thus, we can formulate $\bm \Pi$ as
\begin{align}
	\bm{\Pi} = \text{BlDiag}\left(\{\bm{\Pi}_k\}_{k=1}^K\right),	
\end{align}
where $\bm {\Pi}_k \in \mathbb{C}^{N \times N},~k \in \mathcal{K}$. In order to satisfy \eqref{KKT1}, $\bm \Omega^* $ should also be a block diagonal matrix, which  is formulated as
\begin{align}
	\label{Omega}
	\bm{\Omega^*} = \text{BlDiag}\left(\{\bm{\Omega}_k\}_{k=1}^K\right),	
\end{align}
where $\bm {\Omega}_k^*=\bm I_{N \times N} - \bm {\Pi}_k $. According to \eqref{KKT2} and \eqref{Omega}, $\bm W^*$ can be expressed as
	\begin{align}
		\label{W}
		\bm{W}^{*} = \text{BlDiag}(\{\bm{W}_k^*\}_{k=1}^K).
	\end{align}
	Then, we have  $\sum_{k=1}^{K} \text{Tr}(\bm {\Omega}_k^* \bm W_k^* )=0$, where 
\begin{align}
\label{OmegaSub}
\text{Tr}(\bm {\Omega}_k^* \bm W_k^* )=0,~k \in \mathcal{K}.
\end{align}

According to \eqref{OmegaSub}, $\bm {\Omega}_k^*$  lies in the null space of $\bm W_k^*$. Hence,  we analyze the rank of $\bm W_k^*$ by investigating the characteristic of $\bm {\Omega}_k^*$ (i.e., $ \bm I_{N \times N} - \bm {\Pi}_k $ ), whose  eigenvalues are nonnegative. We denote $ \vartheta_{\text{max},k} $  as the largest eigenvalue of $\bm {\Pi}_k$. If $ \vartheta_{\text{max},k}<1 $, we find that all eigenvalues of $ \bm I_{N \times N} - \bm {\Pi}_k $ are positive and $\text{Rank}( \bm I_{N \times N} - \bm {\Pi}_k )=\text{Rank}(\bm{\Omega}_k^* )=N$. Since \eqref{OmegaSub} holds  if and only if $\bm W_k^*=\bm 0$, which is obviously not the optimal solution. If $ \vartheta_{\text{max},k}>1 $, the smallest eigenvalue of $ \bm {\Omega}_k^* $ is negative, which is against the fact that $ \bm {\Omega}_k^* \succeq \bm 0 $. If  $ \vartheta_{\text{max},k}=1 $, it is found that the eigenvector of the largest eigenvalue of $ \bm {\Pi}_k$ lies in the null space of $ \bm {\Omega}_k^* $, denoted by $\bm \omega_{\text{max},k}$. Thus, we obtain $\bm {W}_k^* = \varpi \bm \omega_{\text{max},k} \bm \omega_{\text{max},k}^H$, which is a rank-one matrix, and $\varpi >0$ is a scaling parameter. Since $\bm{W}^{*}$ is a block diagonal matrix with $\bm {W}_k^*$ for $k\in \mathcal{K}$ being its diagonal sub-matrices, we  derive that the rank of $\bm W$ is $K$, which completes the proof.

\end{document}